\newtheorem{Th}{Theorem}
\newtheorem{Lem}{Lemma}
\newtheorem{Rem}{Remark}
\newtheorem{prop}{Proposition}
\newtheorem{Cor}{Corollary}
\def\1{{\bf 1}}
\tiny\color{gray},
\begin{document}

\title{Quantum accessible information and classical entropy inequalities}
\author{A. S. Holevo, A. V. Utkin \\
Steklov Mathematical Institute, RAS, Moscow, Russia}
\date{}
\maketitle

\begin{abstract}
Computing accessible information for an ensemble of quantum states is a
basic problem in quantum information theory. We show that the recently
obtained optimality criterion (A.S. Holevo, Lobachevskii J. Math., \textbf{43}:7 (2022),
1646-1650), when applied to specific ensembles
of states leads to nontrivial tight entropy inequalities that are discrete
relatives of the famous log-Sobolev inequality. In this light, the
hypothesis of globally information-optimal measurement for an ensemble of
equiangular equiprobable states (quantum pyramids) (B.-G. Englert and J. \v{R}eh\'{a}\v{c}ek, J. Mod. Optics \textbf{57 }N3 (2010)
218-226)  is
reconsidered and the corresponding entropy inequalities are proposed. Via
the optimality criterion, this suggests also an approach to the proof of the
conjectures concerning globally information-optimal observables for quantum
pyramids.

Keywords and phrases: quantum state ensemble, accessible information, convex optimization, quantum pyramid, tight
lower bound for Shannon entropy.
\end{abstract}


\section{\protect\bigskip Introduction}

The problem of accessible information -- the maximal information that can be
extracted from measurements over an ensemble of quantum states -- is a basic
problem in quantum information and communication theory, coming back to its
origin, see e.g. \cite{jma1}, \cite{obzor}, \cite{jozsa}, \cite{hkron}, and
important in particular for applications in quantum cryptography.
The problem was solved in some special (notably, symmetric) cases but in
general remains open. In section (\ref{aopt}) of this paper we describe a
general approach following \cite{ljm} which gives a criterion for optimality
and ties up the problem with \textit{entropy inequalities} -- the tight
lower bounds for the classical Shannon entropy. It turns out that the
optimality criterion when applied to specific ensembles of states leads to
new nontrivial entropy inequalities that are discrete \textquotedblleft
distant relatives\textquotedblright\ of the famous log-Sobolev inequality,
as explained in section \ref{sec:two}.

In this light, the conjecture of an information-optimal measurement for an
ensemble of equiprobable equiangular states -- \textit{quantum pyramid},
formulated and numerically substantiated by Englert and \v{R}eh\'{a}\v{c}ek
in \cite{eng} -- is reconsidered in section \ref{sec:multi} of the present
paper. In the case of \textit{acute} pyramids the corresponding parametric
family of entropy inequalities (\ref{basic}) is proposed in subsections \ref%
{conj}-\ref{acute}. The inequalities suggest the new tight lower bounds for
the Shannon entropy of a discrete probability distribution as compared e.g.
to the known bound from \cite{moser}, theorem 3.11.

In the subsection \ref{obtus} the case of \textit{obtuse} pyramids is
considered to get further insight and to formulate the hypotheses concerning
the corresponding entropy inequalities (\ref{basic2}), (\ref{ineqz}). Our
approach allows also to derive the nonlinear equations (\ref{eqt1}) (resp. (%
\ref{eqt2})) serving to determine the critical values of the parameter $t$
of the optimal observable corresponding to transition from moderately to
strongly acute (resp, obtuse) pyramids.

In the subsection \ref{flat} we derive the tight entropy inequality (\ref%
{ineqw}) implying the global optimality of the hypothetical observable from
\cite{eng} (proposition \ref{prop1} and corollary \ref{cor3}) in the case of
\textit{flat} pyramids. The findings in subsections \ref{obtus}, \ref{flat}
also emphasize that for obtuse and flat pyramids in the dimensions $m\geq 7$
the sharp \textquotedblleft square root\textquotedblright\ observable (SRM)
outperforms the \textquotedblleft unambiguous
discrimination\textquotedblright\ observable (UDM) and tentatively is the
optimal one.

The proposed entropy inequalities may be more convenient for numerical check
than the direct optimization, however proving them analytically is a
challenge. In section \ref{prrrofs}, the suggestions for proofs proposed by
the second author as a response to the hypotheses (\ref{basic}), (\ref%
{basic2}), (\ref{ineqz}) put forward by the first author, are presented. Via
the optimality criterion of section \ref{aopt}, this suggests also an
approach to the proof of the conjecture concerning globally
information-optimal observables for quantum pyramids formulated and
numerically substantiated in \cite{eng}.

\section{\protect\bigskip Accessible information: a criterion for optimality}

\label{aopt}

Let $\mathcal{H}$ be a finite dimensional Hilbert space, $\mathfrak{S}=%
\mathfrak{S}(\mathcal{H})$ the convex set of quantum states (density
operators on $\mathcal{H}$). \ \textit{Ensemble} of quantum states is a
collection $\mathcal{E}=\left\{ \pi _{j},\rho _{j}\right\} _{j=1,\dots ,m}$
where $\rho _{j}\in \mathfrak{S,}$ and $\pi =\left\{ \pi _{j}\right\} $ is a
probability distribution. Let $\mathcal{M}=\left\{ M_{k}\right\} _{k=1,\dots
,n}$ be an observable i.e. a collection of operators on $\mathcal{H}$ such
that $M_{k}\geq 0,\,\sum_{k=1}^{n}M_{k}=I$ (unit operator). The joint
distribution of \textquotedblleft input\textquotedblright\ $j$ and
\textquotedblleft output\textquotedblright\ $k$ is then%
\begin{equation*}
p_{jk}=\pi _{j}\mathrm{Tr\,}\,\rho _{j}M_{k}
\end{equation*}%
and the Shannon mutual information between $j$ and $k$ is (throughout the
paper $\log $ denotes the binary logarithm)
\begin{equation*}
I(\mathcal{E},\mathcal{M})=\sum_{j,k}p_{jk}\log \frac{p_{jk}}{\pi _{j}p_{k}},
\end{equation*}%
with
\begin{equation*}
p_{k}=\sum_{j}p_{jk}=\mathrm{Tr\,}\,\overline{\rho }M_{k},
\end{equation*}%
where $\overline{\rho }=\sum_{j}\pi _{j}\rho _{j}$ is the \textit{average
state} of the ensemble $\mathcal{E}.$

\textit{Accessible information} of the ensemble is defined as
\begin{equation}
A(\mathcal{E})=\sup_{\mathcal{M}}I(\mathcal{E},\mathcal{M}).  \label{ai}
\end{equation}%
It was shown \cite{dav} that the supremum is attained on an observable of
the form $M_{k}=\left\vert \varphi _{k}\rangle \langle \varphi
_{k}\right\vert ,\,k=1,\dots ,n$ with $d\leq n\leq d^{2}$ and linearly
independent $M_{k}$ (if the matrices of $\rho _{j}$ in some basis are real
then the number of components is reduced to $\frac{d(d+1)}{2}$). In
particular, the supremum can be replaced with the maximum.

From now on we assume that the \textit{average state} $\overline{\rho }$
\textit{is nondegenerate}. Introducing the \textit{dual} ensemble $\mathcal{E%
}^{\prime }=\left\{ p_{k},\sigma _{k}\right\} $ with
\begin{equation*}
\sigma _{k}=p_{k}^{-1}\overline{\rho }^{1/2}M_{k}\overline{\rho }%
^{1/2}=\left\vert \phi _{k}\rangle \langle \phi _{k}\right\vert ,
\end{equation*}%
where
\begin{equation*}
|\phi _{k}\rangle =p_{k}^{-1/2}\overline{\rho }^{1/2}|\varphi _{k}\rangle ,
\end{equation*}%
and observable $\mathcal{M}^{\prime }$ with the components
\begin{equation}  \label{mprim}
M_{j}^{\prime }=\overline{\rho }^{-1/2}\left( \pi _{j}\rho _{j}\right)
\overline{\rho }^{-1/2}
\end{equation}%
we have the same average state
\begin{equation*}
\overline{\sigma }=\sum_{k}p_{k}\sigma _{k}=\overline{\rho }
\end{equation*}%
and the new expression for accessible information:
\begin{equation}
A(\mathcal{E})=\max_{\mathcal{E}^{\prime }:\overline{\sigma }=\overline{\rho
}}I(\mathcal{E}^{\prime },\mathcal{M}^{\prime })  \label{ae1}
\end{equation}%
with%
\begin{eqnarray*}
I(\mathcal{E}^{\prime },\mathcal{M}^{\prime }) &=&I(\mathcal{E},\mathcal{M}%
)=-\sum_{j}\pi _{j}\log \pi _{j}+\sum_{k,j}p_{k}p(j|k)\log p(j|k) \\
&=&H(\pi )-\sum_{k}\mathrm{Tr\,}\,\left( p_{k}\sigma _{k}\right) K\left(
\sigma _{k}\right) ,
\end{eqnarray*}%
where $p(j|k)=\mathrm{Tr\,}\,\sigma _{k}M_{j}^{\prime },$%
\begin{equation}
K\left( \sigma \right) =-\sum_{j}M_{j}^{\prime }\log \mathrm{Tr\,}\,\sigma
M_{j}^{\prime }.  \label{K}
\end{equation}%
Therefore%
\begin{equation}
A(\mathcal{E})=H(\pi )-\min_{\mathcal{E}^{\prime }:\overline{\sigma }=%
\overline{\rho }}\sum_{k}\mathrm{Tr\,}\,\left( p_{k}\sigma _{k}\right)
K\left( \sigma _{k}\right)  \label{ae2}
\end{equation}

\begin{Rem}
On the other hand, (\ref{ae1}) is also the value of $\overline{\rho }-$%
\textit{constrained classical capacity} $C(\mathcal{M}^{\prime },\overline{%
\rho })$ of the measurement channel corresponding to the observable $%
\mathcal{M}^{\prime }$ \cite{ljm1}. If $H$ is the Hamiltonian and $E$ is the
energy limit then the energy-constrained capacity is
\begin{equation*}
C(\mathcal{M}^{\prime },E)=\max_{\overline{\rho }:\mathrm{Tr\,}\overline{%
\rho }H\leq E}C(\mathcal{M}^{\prime },\overline{\rho })
\end{equation*}%
and the full classical capacity is just $C(\mathcal{M}^{\prime })=\max_{%
\overline{\rho }}C(\mathcal{M}^{\prime },\overline{\rho }).$
\end{Rem}


\begin{Th}
\label{th1} \textit{\ The minimization problem }%
\begin{equation}
\min_{\mathcal{E}^{\prime }:\overline{\sigma }=\overline{\rho }}\sum_{k}%
\mathrm{Tr\,}\,\left( p_{k}\sigma _{k}\right) K\left( \sigma _{k}\right)
\label{min}
\end{equation}%
\textit{has the dual problem}
\begin{equation}
\max \left\{ \mathrm{Tr\,}\overline{\rho }\Lambda :\Lambda ^{\ast }=\Lambda
,\,\Lambda \leq K(\sigma )\text{ for all }\sigma \in \mathfrak{S}\right\} .
\label{dprob}
\end{equation}

\textit{The following statements are equivalent:}

\begin{description}
\item[(i)] $\Lambda _{0}$ \textit{is the solution of the problem} (\ref%
{dprob}); ensemble $\mathcal{E}_{0}^{\prime }=\left\{ p_{k}^{0},\sigma
_{k}^{0}\right\} $ with $\sigma _{k}^{0}=\left\vert \phi _{k}^{0}\rangle
\langle \phi _{k}^{0}\right\vert $ \textit{is the solution of the problem} (%
\ref{min}) (equivalently, observable $M_{k}^{0}=\left\vert \varphi
_{k}^{o}\rangle \langle \varphi _{k}^{0}\right\vert ,\,k=1,\dots ,n,$ with%
\begin{equation*}
|\varphi _{k}^{0}\rangle =\left( p_{k}^{0}\right) ^{1/2}\overline{\rho }%
^{-1/2}|\phi _{k}^{0}\rangle ,
\end{equation*}
maximizes the accessible information (\ref{ai}));

\item[(ii)]

\begin{description}
\item[a.] $\Lambda _{0}\leq K(\sigma )$ \textit{for all} $\sigma \in
\mathfrak{S};$

\item[b.] $\left[ K(\sigma _{k}^{0})-\Lambda _{0}\right] \,|\phi
_{k}^{0}\rangle =0,\,k=1,\dots ,n$.
\end{description}

\item[(iii)]

\begin{description}
\item[a.] The entropy inequality%
\begin{equation}
-\sum_{j}\left\langle \psi |M_{j}^{\prime }|\psi \right\rangle \log
\left\langle \psi |M_{j}^{\prime }|\psi \right\rangle \mathrm{\,}\,\geq
\left\langle \psi |\Lambda _{0}|\psi \right\rangle ,  \label{ent}
\end{equation}%
holds for all unit vectors $\psi \in \mathcal{H};$

\item[b.] The unit vectors $|\phi _{k}^{0}\rangle $ turn (\ref{ent}) into
equality:%
\begin{equation}
-\sum_{j}\left\langle \phi _{k}^{0}|M_{j}^{\prime }|\phi
_{k}^{0}\right\rangle \log \left\langle \phi _{k}^{0}|M_{j}^{\prime }|\phi
_{k}^{0}\right\rangle =\left\langle \phi _{k}^{0}|\Lambda _{0}|\phi
_{k}^{0}\right\rangle ,\quad k=1,\dots .n.  \label{enteq}
\end{equation}
\end{description}
\end{description}

The solution of the dual problem (\ref{dprob}) is unique on the support of
the operator $\overline{\rho }$ i.e. $\Lambda _{0}\overline{\rho }=\Lambda
_{0}^{\prime}\overline{\rho }$ for any two solutions $\Lambda _{0},\Lambda
_{0}^{\prime}.$ In particular, if $\overline{\rho }$ is non-degenerate, then
the solution of the dual problem is unique.
\end{Th}

According to (\ref{ae2}), the value of the accessible information (\ref{ae2}%
) is then
\begin{equation}
A(\mathcal{E})=H(\pi )-\mathrm{Tr\,}\overline{\rho }\Lambda _{0}.
\label{acces}
\end{equation}

\begin{proof}
We introduce the POVM on $\mathfrak{S}=\mathfrak{S}(\mathcal{H})$ as%
\begin{equation*}
\Pi (d\sigma )=\sum_{k}\left( p_{k}\sigma _{k}\right) \delta _{\sigma
_{k}}(d\sigma ),
\end{equation*}%
where $\delta _{\sigma _{k}}(d\sigma )$ is the $\delta -$measure
concentrated in the point $\sigma _{k}\in \mathfrak{S.}$ then the problem (%
\ref{min}) can be written as\footnote{%
We refer to \cite{ljm} for rigorous general definition of the traced
integral and formulations of the optimality conditions.}
\begin{eqnarray*}
\mathrm{Tr}\int_{\mathfrak{S}}\,K(\sigma )\Pi (d\sigma ) &\longrightarrow
&\min  
\\
\Pi (d\sigma ) &\geq &0\text{ },  \notag \\
\Pi (\mathfrak{S}) &=&\overline{\rho },  \notag
\end{eqnarray*}%
which is similar to the Bayes estimation problem \cite{jma1}.
This is a convex programming problem for which the dual problem is (\ref%
{dprob}) and the optimality conditions are (ii.a) and%
\begin{equation*}
\int_{B}\left[ K(\sigma )-\Lambda _{0}\right] \Pi _{0}(d\sigma )=0,\quad
B\subset \mathfrak{S}(\mathcal{H}),
\end{equation*}%
see \cite{ljm} for detail. In terms of the optimal ensemble $\mathcal{E}_{0}^{\prime
}=\left\{ p_{k}^{0},\sigma _{k}^{0}\right\} $ the last condition can be
rewritten as%
\begin{equation}
\left[ K(\sigma _{k}^{0})-\Lambda _{0}\right] \,\sigma _{k}^{0}=0,\quad
(p_{k}^{0}>0).  \label{2b}
\end{equation}%
Note that by summing over $k$ we obtain%
\begin{equation}
\sum_{k}K(\sigma _{k}^{0})\,p_{k}^{0}\sigma _{k}^{0}\,=\Lambda _{0}\overline{%
\rho },  \label{3b}
\end{equation}%
where $\Lambda _{0}$ must be Hermitean operator. This implies that $\Lambda _{0}\overline{\rho }$
is the same for all solutions of the dual problem.

By taking into account that $\sigma _{k}^{0}=\left\vert \phi _{k}^{0}\rangle
\langle \phi _{k}^{0}\right\vert ,$ we can rewrite (\ref{2b}) as
\begin{equation}
\left[ K(\sigma _{k}^{0})-\Lambda _{0}\right] |\phi _{k}^{0}\rangle =0
\label{ii2}
\end{equation}%
and (\ref{3b}) as
\begin{equation*}
\sum_{k}K(\sigma _{k}^{0})\,p_{k}^{0}\left\vert \phi _{k}^{0}\rangle \langle
\phi _{k}^{0}\right\vert =\Lambda _{0}\overline{\rho }.  \label{l0}
\end{equation*}

Taking into account (\ref{K}), the condition (ii.a) becomes%
\begin{equation*}
-\sum_{j}\left\langle \psi |M_{j}^{\prime }|\psi \right\rangle \log \mathrm{%
Tr\,}\,\sigma M_{j}^{\prime }\geq \left\langle \psi |\Lambda _{0}|\psi
\right\rangle ,\quad \psi \in \mathcal{H},
\end{equation*}%
for all $\sigma \in \mathfrak{S.}$ Without loss of generality we can assume
that $\left\Vert \psi \right\Vert =1$ here. Then for fixed $\psi $ the
minimal value of the lefthand side is
\begin{equation*}
-\sum_{j}\left\langle \psi |M_{j}^{\prime }|\psi \right\rangle \log
\left\langle \psi |M_{j}^{\prime }|\psi \right\rangle
\end{equation*}%
since the difference is just the relative entropy of two probability
distributions.

Thus we arrive to the entropy inequality (\ref{ent}) as an equivalent of the
condition (ii.a). Further, by taking into account that $K(\sigma
_{k}^{0})-\Lambda _{0}\geq 0,$ the condition (ii.b) in the form (\ref{ii2})
is the same as
\begin{equation*}
\langle \phi _{k}^{0}|K(\sigma _{k}^{0})|\phi _{k}^{0}\rangle =\langle \phi
_{k}^{0}|\Lambda _{0}|\phi _{k}^{0}\rangle
\end{equation*}%
which via (\ref{K}) means that the unit vectors $|\phi _{k}^{0}\rangle $,
satisfying the constraint $\sum_{k}p_{k}^{0}\left\vert \phi _{k}^{0}\rangle
\langle \phi _{k}^{0}\right\vert =\overline{\rho },$ turn (\ref{ent}) into
the equality (\ref{enteq}).
\end{proof}

Especially important for us is the case of pure-states ensemble with $\rho
_{j}=|\psi _{j}\rangle \langle \psi _{j}|$ where $|\psi _{j}\rangle
;j=1,\dots ,d$, are linearly independent vectors. In this case the
resolution of the identity (\ref{mprim}) is
\begin{equation*}
M_{j}^{\prime }=|e_{j}\rangle \langle e_{j}|,
\end{equation*}%
where the vectors
\begin{equation*}
|e_{j}\rangle =\sqrt{\pi _{j}}\bar{\rho}^{-1/2}|\psi _{j}\rangle ;\quad
j=1,\dots ,d,
\end{equation*}%
by necessity form an orthonormal basis (depending on the ensemble $\mathcal{E%
}$). Then the entropy inequality (\ref{ent}) takes the form
\begin{equation*}
-\sum_{j=1}^{d}\left\vert z_{j}\right\vert ^{2}\log \left\vert
z_{j}\right\vert ^{2}\geq \sum_{j=1}^{d}\lambda _{jk}\overline{z_{j}}z_{k},
\label{ent1}
\end{equation*}%
where $\lambda _{jk}=\langle e_{j}|\Lambda _{0}|e_{k}\rangle $ and $%
z_{j}=\langle e_{j}|\psi \rangle $ are complex variables subject only to the
constraint \textit{\ }%
\begin{equation}
\sum_{j=1}^{m}\left\vert z_{j}\right\vert ^{2}=1.  \label{condw1}
\end{equation}
The left-hand side is just the Shannon entropy of the p.d. $\left\{
\left\vert z_{j}\right\vert ^{2};j=1,\dots ,d\right\} $, and such an
inequality can be approached with the tools of classical analysis. We will
need the following

\begin{Lem}
\label{lemma} 
Let $\lambda _{jk}$ be real numbers. If the inequality
\begin{equation}
\sum_{j=1}^{m}\left\vert z_{j}\right\vert ^{2}\log \left\vert
z_{j}\right\vert ^{2}+\sum_{j=1}^{d}\lambda _{jk}\overline{z_{j}}z_{k}\leq 0
\label{ineql}
\end{equation}%
holds for all real $z_{j}$ satisfying the condition (\ref{condw1}), then it
holds for all complex $z_{j}$ satisfying the same condition.
\end{Lem}

\begin{proof}
Let $z_{j}=x_{j}+iy_{j}$ for real $x_{j},y_{j}$, then (\ref{condw1}) amounts
to
\begin{equation*}
\sum_{j=1}^{m}x_{j}^{2}+\sum_{j=1}^{m}y_{j}^{2}=1.
\end{equation*}

Denote $P=\sum_{j=1}^{m}x_{j}^{2},\,1-P=\sum_{j=1}^{m}y_{j}^{2}$. The
function $f(\xi )=\xi \log \xi $ is convex, and $\lambda _{jk}$ are real
hence
\begin{eqnarray*}
\sum_{j=1}^{m}\left\vert z_{j}\right\vert ^{2}\log \left\vert
z_{j}\right\vert ^{2}+\sum_{j=1}^{d}\lambda _{jk}\overline{z_{j}}z_{k}
&=&\sum_{j=1}^{m}f\left( P\left( \frac{x_{j}}{\sqrt{P}}\right)
^{2}+(1-P)\left( \frac{y_{j}}{\sqrt{1-P}}\right) ^{2}\right)  \\
&+&P\sum_{j=1}^{d}\lambda _{jk}x_{j}x_{k}+(1-P)\sum_{j=1}^{d}\lambda
_{jk}y_{j}y_{k} \\
&\leq &P\left[ \sum_{j=1}^{m}f\left( \left( \frac{x_{j}}{\sqrt{P}}\right)
^{2}\right) +\sum_{j=1}^{d}\lambda _{jk}x_{j}x_{k}\right]  \\
&+&(1-P)\left[ \sum_{j=1}^{m}f\left( \left( \frac{y_{j}}{\sqrt{1-P}}\right)
^{2}\right) +\sum_{j=1}^{d}\lambda _{jk}y_{j}y_{k}\right] \leq 0,
\end{eqnarray*}%
because
\begin{equation*}
\sum_{j=1}^{m}\left( \frac{x_{j}}{\sqrt{P}}\right) ^{2}=\sum_{j=1}^{m}\left(
\frac{y_{j}}{\sqrt{1-P}}\right) ^{2}=1,\quad
\end{equation*}%
and by assumption (\ref{ineql}) holds for real $z_{j}$ satisfying (\ref%
{condw1}).  \end{proof}

The optimality conditions can be useful in the following ways:

\begin{enumerate}
\item for a problem which has a hypothetical solution, the conditions can be
used to verify its actual optimality. One inserts the conjectured $%
p_{k}^{0},\sigma _{k}^{0}$ into (\ref{2b}) to compute Hermitean $\Lambda
_{0}\ $. Then the main difficulty will be proof of the entropy inequality (%
\ref{ent}). In the continuous variable version, this was our strategy in the
solution of the related Gaussian optimizers conjecture for the capacity of
Gaussian measurement channels \cite{ljm1}, with the new generalizations of
the famous log-Sobolev inequality as the entropy inequality.

\item for a problem which has a known solution, we get an entropy inequality
(\ref{ent}) as a consequence of the optimality. Also one can obtain an
alternative proof of the optimality if one can find an independent proof of
the entropy inequality. In the next section we illustrate this point with a
simple example resulting in the yet nontrivial tight lower bounds for the
binary Shannon entropy.

\item The uniqueness of the solution of the dual problem (assuming $\bar{\rho%
}$ non-degenerate) has the following useful consequence. Assume that the
ensemble $\mathcal{E}=\left\{ \pi _{j},\rho _{j}\right\} _{j=1,\dots ,m}$ is
\textit{symmetric} in that $\pi _{j}\equiv 1/m$ and there is a group of
*-automorphisms $\left\{ \alpha _{g}; g\in G\right\} $ of the algebra of
operators on $\mathcal{H}$ acting transitively on the set of states $\left\{
\rho _{j}\right\} _{j=1,\dots ,m}$. Then $\Lambda _{0}$ is an invariant of
this group: $\alpha _{g}^{\ast }\left[ \Lambda _{0}\right] =\Lambda
_{0};g\in G.$
\end{enumerate}

\section{Example: two pure states}

\label{sec:two}

The case of ensemble of two states was treated previously in \cite{lev},
\cite{keil}. For simplicity of demonstration we take two pure equiprobable
states. Let $|e_{0}\rangle ,|e_{1}\rangle $ be an orthonormal basis in
two-dimensional Hilbert space. Consider ensemble $\mathcal{E}=\left\{ \pi
_{\pm },\rho _{\pm }\right\} ,$ where $\pi _{\pm }=\frac{1}{2},\rho _{\pm
}=\left\vert \psi _{\pm }\rangle \langle \psi _{\pm }\right\vert $ with $%
|\psi _{\pm }\rangle =\cos \alpha /2|e_{0}\rangle \pm \sin \alpha
/2|e_{1}\rangle ,\,0\leq \alpha \leq \pi .$ As argued in \cite{lev} the
accesible information%
\begin{equation}
A(\mathcal{E})=1-h\left( \frac{1+\sin \alpha }{2}\right)  \label{opt}
\end{equation}%
is attained by the optimal observable $\mathcal{M=}\left\{
M_{+},M_{-}\right\} $ where $M_{\pm }=\left\vert e_{\pm }\rangle \langle
e_{\pm }\right\vert $ and $|e_{\pm }\rangle =\frac{1}{\sqrt{2}}\left(
|e_{0}\rangle \pm |e_{1}\rangle \right) $ is an orthonormal basis\footnote{%
Strictly speaking, the paper \cite{lev} contains the check of \textit{local}
optimality.}. Here
\begin{equation*}
h(t)=-t\log t-(1-t)\log (1-t)
\end{equation*}%
is the classical binary entropy.

Let us apply theorem \ref{th1} giving a proof of the global optimality and,
as a byproduct, a new entropy inequality. The average state of the ensemble $%
\mathcal{E}$ is
\begin{equation*}
\overline{\rho }=\frac{1}{2}\left\vert \psi _{+}\rangle \langle \psi
_{+}\right\vert +\frac{1}{2}\left\vert \psi _{-}\rangle \langle \psi
_{-}\right\vert =\cos ^{2}\alpha /2\left\vert e_{0}\rangle \langle
e_{0}\right\vert +\sin ^{2}\alpha /2\left\vert e_{1}\rangle \langle
e_{1}\right\vert
\end{equation*}%
with%
\begin{equation*}
\overline{\rho }^{1/2}=\cos \alpha /2\left\vert e_{0}\rangle \langle
e_{0}\right\vert +\sin \alpha /2\left\vert e_{1}\rangle \langle
e_{1}\right\vert  \label{1/2}
\end{equation*}%
so that
\begin{equation*}
\frac{1}{\sqrt{2}}\overline{\rho }^{-1/2}|\psi _{\pm }\rangle =|e_{\pm
}\rangle  \label{eq1}
\end{equation*}%
and
\begin{equation}
M_{\pm }^{\prime }=\frac{1}{2}\overline{\rho }^{-1/2}\rho _{\pm }\overline{%
\rho }^{-1/2}=\left\vert e_{\pm }\rangle \langle e_{\pm }\right\vert ,
\label{mprime}
\end{equation}%
which is also the conjectured optimal measurement. Therefore the conjectured
optimal ensemble $\mathcal{E}_{0}^{\prime }$ coincides with $\mathcal{E}$ \
because
\begin{equation*}
\sigma _{\pm }^{0}=2\overline{\rho }^{1/2}\left\vert e_{\pm }\rangle \langle
e_{\pm }\right\vert \overline{\rho }^{1/2}=\left\vert \psi _{\pm }\rangle
\langle \psi _{\pm }\right\vert ,\quad \pi _{\pm }^{0}=\frac{1}{2},
\end{equation*}%
$\,$ so that $|\phi _{\pm }^{0}\rangle =|\psi _{\pm }\rangle .$ Let us check
the optimality conditions for the ensemble $\mathcal{E}^{0}=\mathcal{E}$.

Introducing the important parameter
\begin{equation*}
p=\frac{1+\sin \alpha }{2},\quad \frac{1}{2}\leq p\leq 1,  \label{par}
\end{equation*}%
we have
\begin{equation*}
\left\langle e_{\pm }|\psi _{\pm }\right\rangle =\sqrt{p},\quad \left\langle
e_{\pm }|\psi _{\mp }\right\rangle =\sqrt{1-p}.
\end{equation*}%
Incerting $\sigma _{\pm }^{0}=\left\vert \psi _{\pm }\rangle \langle \psi
_{\pm }\right\vert $ into%
\begin{equation*}
K(\sigma )=-\sum_{\pm }\left\vert e_{\pm }\rangle \langle e_{\pm
}\right\vert \log \left\langle e_{\pm }|\sigma |e_{\pm }\right\rangle
\end{equation*}%
we obtain
\begin{eqnarray*}
K(\sigma _{+}^{0}) &=&-\left\vert e_{+}\rangle \langle e_{+}\right\vert \log
p-\left\vert e_{-}\rangle \langle e_{-}\right\vert \log (1-p), \\
K(\sigma _{-}^{0}) &=&-\left\vert e_{+}\rangle \langle e_{+}\right\vert \log
(1-p)-\left\vert e_{-}\rangle \langle e_{-}\right\vert \log p.
\end{eqnarray*}

The ensemble $\mathcal{E}$ is symmetric with respect to the reflection
relative to the axis $|e_{0}\rangle ;$ according to the remark 3 at the end
of previous section, $\Lambda _{0}$ should also obey this symmetry, hence%
\begin{equation}
\Lambda _{0}=\lambda _{0}(p)\left\vert e_{0}\rangle \langle e_{0}\right\vert
-\lambda _{1}(p)\left\vert e_{1}\rangle \langle e_{1}\right\vert .
\label{lam0}
\end{equation}%
Solving the conditions (ii.b)%
\begin{equation*}
\left[ K(\sigma _{\pm }^{0})-\Lambda _{0}\right] \,|\phi _{\pm }^{0}\rangle
=0
\end{equation*}%
and taking into account that
\begin{equation*}
\cos \alpha /2=\frac{1}{\sqrt{2}}\left( \sqrt{p}+\sqrt{1-p}\right) ,\quad
\sin \alpha /2=\frac{1}{\sqrt{2}}\left( \sqrt{p}-\sqrt{1-p}\right) ,
\end{equation*}%
results in%
\begin{equation*}
\lambda _{0}(p)=-\frac{\sqrt{p}\log p+\sqrt{1-p}\log \left( 1-p\right) }{%
\sqrt{p}+\sqrt{1-p}},
\end{equation*}%
\begin{equation*}
\lambda _{1}(p)=\frac{\sqrt{p}\log p-\sqrt{1-p}\log \left( 1-p\right) }{%
\sqrt{p}-\sqrt{1-p}},
\end{equation*}%
provided $p>1/2$.

Then the conditions (ii.b) are fulfilled while (ii.a) reduces to (\ref{ent})
which via (\ref{mprime}), (\ref{lam0}) amounts to the following parametric
family of tight lower bounds for the classical binary entropy

\begin{prop}
For $p>1/2$ and $0\leq t\leq 1$ (in our case $t=|\left\langle e_{+}|\psi
\right\rangle |^{2},1-t=|\left\langle e_{-}|\psi \right\rangle |^{2}$),
\begin{equation}  \label{ineq}
h(t)\geq (\lambda _{0}(p)+\lambda _{1}(p))\sqrt{t(1-t)}-\frac{\lambda
_{1}(p)-\lambda _{0}(p)}{2},
\end{equation}%
where
\begin{equation*}
\frac{\lambda _{0}(p)+\lambda _{1}(p)}{2}=\sqrt{p\left( 1-p\right) }\frac{%
\log p-\log \left( 1-p\right) }{2p-1}\equiv \mu _{0}(p),
\end{equation*}%
\begin{equation*}
\frac{\lambda _{1}(p)-\lambda _{0}(p)}{2}=\frac{p\log p-\left( 1-p\right)
\log \left( 1-p\right) }{2p-1}.
\end{equation*}%
The inequality (\ref{ineq}) becomes equality for $t=p$ and $t=1-p.$
\end{prop}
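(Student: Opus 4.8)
The plan is to prove the real-variable inequality (\ref{ineq}) directly by elementary calculus; since the equivalence of (\ref{ineq}) with condition (ii.a) of Theorem~\ref{th1} for the two-state ensemble has already been carried out in the text preceding the Proposition, nothing operator-theoretic will remain. Write the claim as $f(t)\ge 0$ on $[0,1]$, where
\begin{equation*}
f(t)=h(t)-2\mu_0(p)\sqrt{t(1-t)}+c(p),\qquad c(p)=\frac{\lambda_1(p)-\lambda_0(p)}{2}=\frac{p\log p-(1-p)\log(1-p)}{2p-1},
\end{equation*}
and $2\mu_0(p)=\lambda_0(p)+\lambda_1(p)>0$ as in the statement. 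Since $h(1-t)=h(t)$, the function $f$ is symmetric about $t=\tfrac12$, so it is enough to prove $f\ge 0$ on $[\tfrac12,1]$ and to identify the equality point there.

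First I would substitute $s=2t-1\in[0,1]$, so that $\sqrt{t(1-t)}=\tfrac12\sqrt{1-s^2}$ and $t=p$ becomes $s_0:=2p-1\in(0,1)$. Using $h'(t)=\log\frac{1-t}{t}$ and simplifying, one finds
\begin{equation*}
\frac{df}{ds}=\frac{s}{\sqrt{1-s^2}}\,\bigl(\mu_0(p)-\psi(s)\bigr),\qquad \psi(s):=\frac{\sqrt{1-s^2}\,\operatorname{artanh} s}{s\ln 2}.
\end{equation*}
The heart of the argument is then a one-variable monotonicity lemma: $\psi$ is strictly decreasing on $(0,1)$, from $\psi(0^+)=1/\ln 2$ down to $\psi(1^-)=0$. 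I would prove it via the logarithmic derivative,
\begin{equation*}
s(1-s^2)(\operatorname{artanh} s)\,\frac{\psi'(s)}{\psi(s)}=s-\operatorname{artanh} s<0\qquad(0<s<1),
\end{equation*}
the last inequality being just $\operatorname{artanh} s=s+\tfrac{s^3}{3}+\tfrac{s^5}{5}+\cdots>s$.

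Next I would record the single algebraic identity $\psi(s_0)=\mu_0(p)$, a direct computation from $\operatorname{artanh}(2p-1)=\tfrac12\ln\frac{p}{1-p}$ and $\sqrt{1-(2p-1)^2}=2\sqrt{p(1-p)}$. Combined with the monotonicity of $\psi$, this gives $\psi(s)>\mu_0(p)$ on $(0,s_0)$ and $\psi(s)<\mu_0(p)$ on $(s_0,1)$, hence $df/ds<0$ on $(0,s_0)$ and $df/ds>0$ on $(s_0,1)$. So, back in the variable $t$, the function $f$ strictly decreases on $[\tfrac12,p]$ and strictly increases on $[p,1]$ (it is continuous up to $t=1$, where $h'$ blows up but $f'$ stays positive), and therefore $\min_{[1/2,1]}f=f(p)$, attained only at $t=p$. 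A short computation using the stated formula for $c(p)$ then gives $h(p)+c(p)=\dfrac{2p(1-p)(\log p-\log(1-p))}{2p-1}=2\mu_0(p)\sqrt{p(1-p)}$, i.e. $f(p)=0$. By the symmetry $f(1-t)=f(t)$ this yields $f\ge 0$ on $[0,1]$, with equality precisely at $t=p$ and $t=1-p$.

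The only step that is not pure bookkeeping is recognizing that after the substitution $s=2t-1$ the whole derivative condenses into the scalar $\psi(s)$, and that the monotonicity of $\psi$ collapses to the elementary bound $\operatorname{artanh} s>s$; equivalently, that $f'$ has at most the three zeros $t=\tfrac12,1-p,p$ with the sign pattern forcing $t=p$ to be the interior minimum. Everything else — the value $f(p)=0$, the boundary behaviour at $t=0,1$, and the reduction of (\ref{ent}) to (\ref{ineq}) by maximizing $\langle\psi|\Lambda_0|\psi\rangle$ over the phase of $\psi$ — is routine.
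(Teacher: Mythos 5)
Your proof is correct and follows essentially the same route as the paper: you compute $f'$ (which is the paper's $\frac{d}{dt}[h(t)-g(t,p)]$), factor out a sign-definite prefactor so that everything reduces to comparing $\mu_0(p)$ with $\mu_0(t)$, establish that $\mu_0$ is strictly monotone on $[\tfrac12,1]$, and conclude from the resulting sign pattern that the minimum of $f$ on $[\tfrac12,1]$ is at $t=p$ with $f(p)=0$. The only differences are cosmetic — you parametrize by $s=2t-1$ and prove monotonicity of $\psi(s)=\mu_0\bigl(\tfrac{1+s}{2}\bigr)$ via the logarithmic derivative reducing to $\operatorname{artanh} s>s$, whereas the paper sketches the same monotonicity via the substitution $u=\sqrt{t/(1-t)}$ — so this counts as the same approach, just with the monotonicity lemma worked out in more explicit detail.
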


In these relations $p>1/2;$ taking the limit $p\rightarrow 1/2$ the
inequality (\ref{ineq}) becomes%
\begin{equation}
h(t)\geq 2\log e\sqrt{t(1-t)}-\log \frac{e}{2}.  \label{ineq0}
\end{equation}%
The value $p=1/2$ corresponds to the trivial case $\rho _{+}=\rho _{-}$ with
$A(\mathcal{E})=0$. It is remarkable, however, that in the limit $%
p\rightarrow 1/2$ we get the nontrivial entropy inequality (\ref{ineq0}). It
is equivalent to an inequality which was a starting point in the original
derivation of logarithmic Sobolev inequality by L. Gross \cite{gross}.

Thus (\ref{ineq}), (\ref{ineq0}) can be regarded as discrete
\textquotedblleft distant relatives\textquotedblright\ of the log-Sobolev
inequality. In the paper \cite{ljm1} we elaborated continuous variable
version of the optimality conditions and applied it in the bosonic Gaussian
case; the corresponding entropy inequality turned out to be a generalization
of the genuine log-Sobolev inequality.

\begin{proof} 
For convenience of differentiation we pass to natural logarithms. Denote by $%
g(t,p)$ the righthand side of (\ref{ineq}), then
\begin{eqnarray}
\frac{d}{dt}h(t) &=&-\ln t+\ln (1-t),  \label{der1} \\
\frac{d}{dt}g(t,p) &=&\mu _{0}(p)\frac{1-2t}{\sqrt{t\left( 1-t\right) }}.
\label{der2}
\end{eqnarray}%
Taking into account the symmetry of the functions with respect to the point $%
t=1/2,$ we can restrict to the interval $t\in \lbrack 1/2,1].$ One can check
that $h(p)=g(p,p)$ and $\frac{d}{dt}|_{t=p}h(t)=\frac{d}{dt}|_{t=p}g(t,p)$
which means that $h(t)$ is the envelop of the family $g(t,p);\,p\geq 1/2,$
with $t=p$ as the tangential points. By using (\ref{der1}), (\ref{der2}) we
obtain the key relation
\begin{equation*}
\frac{d}{dt}\left[ h(t)-g(t,p)\right] =\left[ \mu _{0}(p)-\mu _{0}(t)\right]
\frac{2t-1}{\sqrt{t\left( 1-t\right) }}.
\end{equation*}

The function
\begin{equation*}
\mu _{0}(t)=\sqrt{t\left( 1-t\right) }\frac{\ln t-\ln \left( 1-t\right) }{%
2t-1}
\end{equation*}
is monotonously decreasing on $[1/2,1],$ which can be seen by showing its
derivative is negative on $(1/2,1]$ (for this it is convenient to make a
monotonic change of variable $u=\sqrt{t/\left( 1-t\right) }$).

Thus $h(t)-g(t,p)=0$ for $t=p,$ and $\frac{d}{dt}\left[ h(t)-g(t,p)\right] $
has the same sign as $t-p.$ Hence $h(t)-g(t,p)\geq 0.$
\end{proof}

Theorem \ref{th1} then implies the global optimality of the observable $%
\mathcal{M=}\left\{ M_{+},M_{-}\right\} .$ Also
\begin{equation*}
\mathrm{Tr\,}\overline{\rho }\Lambda _{0}=\lambda _{0}(p)\cos ^{2}\alpha
/2-\lambda _{1}(p)\sin ^{2}\alpha /2=h(p)
\end{equation*}%
and (\ref{acces}) implies $A(\mathcal{E})=1-h(p)$ in accordance with (\ref%
{opt}).

\section{Multidimensional case}

\label{sec:multi}

\subsection{The lower bound for the Shannon entropy: formulation and
discussion}

\label{conj}

The multidimensional generalization of (\ref{ineq}) to be derived in what
follows is:

\begin{Th}
\label{th2} \textit{For integer $m\geq 2$ and} $p>\frac{m-1}{m}$
\begin{equation}
-\sum_{j=1}^{m}t_{j}\log t_{j}\geq \mu _{0}(p)\left( \sum_{j=1}^{m}\sqrt{%
t_{j}}\right) ^{2}-\mu _{1}(p),  \label{basic}
\end{equation}%
\textit{where $t_{j}\geq 0,\sum_{j=1}^{m}t_{j}=1$ and}
\begin{equation}
\mu _{0}(p)=\frac{\sqrt{\frac{p\left( 1-p\right) }{m-1}}\left( \log p-\log
\frac{1-p}{m-1}\right) }{\left( \sqrt{p}+\left( m-1\right) \sqrt{\frac{1-p}{%
m-1}}\right) \left( \sqrt{p}-\sqrt{\frac{1-p}{m-1}}\right) },  \label{mu0}
\end{equation}%
\begin{equation}
\mu _{1}(p)=\frac{\sqrt{p}\log p-\sqrt{\frac{1-p}{m-1}}\log \frac{1-p}{m-1}}{%
\left( \sqrt{p}-\sqrt{\frac{1-p}{m-1}}\right) }.  \label{mu1}
\end{equation}

This lower bound is \textit{tight}: the graphs of the functions of $%
t_{1},\dots ,t_{m} $ in the left and in the right sides of (\ref{basic}) are
tangential at the points obtained by permutations of $t_{1}=p,t_{k}=\frac{1-p%
}{m-1},k=2,\dots ,m$.
\end{Th}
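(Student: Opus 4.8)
The plan is to prove (\ref{basic}) by minimizing, over the simplex $\{t_j\ge 0,\ \sum_j t_j=1\}$, the function
\[
G(t_1,\dots,t_m)=-\sum_{j=1}^m t_j\log t_j-\mu_0(p)\Bigl(\sum_{j=1}^m\sqrt{t_j}\Bigr)^2 ,
\]
showing that its minimum equals $-\mu_1(p)$ and is attained exactly at the permutations of $\mathbf t_0=(p,\tfrac{1-p}{m-1},\dots,\tfrac{1-p}{m-1})$. This mirrors the $m=2$ case, where everything was reduced to the one-variable function $g(t,p)$. As there, I pass to natural logarithms for differentiation (this only rescales $\mu_0,\mu_1$ by $\ln 2$). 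The minimum of $G$ exists by continuity on a compact set (recall $t\log t\to0$), and note $\mu_0(p)>0$ for $p>\frac{m-1}{m}$, which will be used repeatedly.

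The first reduction is that a minimizer $\mathbf t^\ast$ has no vanishing coordinate. If $t^\ast_m=0$ while $t^\ast_1>0$, transfer mass $\eps$ from the first to the last coordinate: the entropy term changes by $-\eps\log\eps+O(\eps)$, while $\sum_j\sqrt{t_j}$ grows by $\sqrt\eps+O(\eps)$, hence $\bigl(\sum_j\sqrt{t_j}\bigr)^2$ by $2S\sqrt\eps+O(\eps)$ with $S=\sum_j\sqrt{t^\ast_j}$; since $-\eps\log\eps=o(\sqrt\eps)$ and $\mu_0(p)>0$, $\Delta G=-2\mu_0(p)S\sqrt\eps+o(\sqrt\eps)<0$, a contradiction. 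The second reduction is that the positive coordinates of a minimizer take at most two distinct values: at an interior critical point the Lagrange condition reads $-\log t_j-\log e-\mu_0(p)S/\sqrt{t_j}=\lambda$ for all $j$, i.e. $x=\sqrt{t_j}$ solves $-2\log x-\mu_0(p)S/x=\mathrm{const}$, and the left side is strictly unimodal on $(0,\infty)$, hence assumes each value at most twice. Combining the two reductions, a minimizer has the form ``$\ell$ coordinates equal to $a$ and $m-\ell$ equal to $b$'' with $\ell a+(m-\ell)b=1$, $a,b>0$; relabelling, we may assume $1\le\ell\le m-1$ and $a\ge 1/m\ge b$, the remaining possibility being the uniform distribution, treated separately.

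For each $\ell\in\{1,\dots,m-1\}$ set $b=b_\ell(a)=\frac{1-\ell a}{m-\ell}$, $a\in[1/m,1/\ell]$, and
$\Phi_\ell(a)=-\ell a\log a-(m-\ell)b\log b-\mu_0(p)\bigl(\ell\sqrt a+(m-\ell)\sqrt b\bigr)^2$. A direct differentiation, parallel to (\ref{der1})--(\ref{der2}), yields a factorization
\[
\Phi_\ell'(a)=c_\ell(a)\,\bigl(\mu_0(p)-\Theta_\ell(a)\bigr),\qquad
\Theta_\ell(a)=\frac{\sqrt{ab}\,(\sqrt a+\sqrt b)}{\ell\sqrt a+(m-\ell)\sqrt b}\cdot\frac{\log a-\log b}{a-b},
\]
with $c_\ell(a)>0$ on $(1/m,1/\ell)$. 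Two facts then make the argument run, in analogy with the $m=2$ proof: (i) $\Theta_1(p)=\mu_0(p)$ and $\Phi_1(p)=-\mu_1(p)$, which is a rearrangement of (\ref{mu0})--(\ref{mu1}) using $p-\tfrac{1-p}{m-1}=\bigl(\sqrt p-\sqrt{\tfrac{1-p}{m-1}}\bigr)\bigl(\sqrt p+\sqrt{\tfrac{1-p}{m-1}}\bigr)$; and (ii) $\Theta_\ell$ is strictly decreasing on $[1/m,1/\ell]$. Fact (ii) is the multidimensional counterpart of the monotonicity of $\mu_0(t)$ used for $m=2$, and I expect it to follow from the same device (a monotone substitution such as $u=\sqrt{a/b}$ reducing the sign of $\Theta_\ell'$ to an elementary inequality). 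Granting (ii), $\mu_0(p)-\Theta_\ell$ is increasing, $\Phi_\ell$ is first non-increasing then non-decreasing, and its minimum on the interval is attained at the unique interior root $a_\ell$ of $\Theta_\ell(a)=\mu_0(p)$ (with $a_1=p$), the endpoint $a=1/\ell$ being excluded by the no-zeros step.

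It then remains to compare the candidate minima: one must verify $\Phi_\ell(a_\ell)\ge-\mu_1(p)$ for $\ell=2,\dots,m-1$ together with the uniform-distribution estimate $\log m-m\,\mu_0(p)\ge-\mu_1(p)$. Granting this, $\min G=-\mu_1(p)=\Phi_1(p)$, which is precisely (\ref{basic}); and tightness follows at once since $G(\mathbf t_0)=-\mu_1(p)$ while $\nabla G(\mathbf t_0)$ is parallel to $\nabla\bigl(\sum_j t_j\bigr)$, so the graphs of the two sides of (\ref{basic}) are tangent at $\mathbf t_0$ and its permutations. This cross-$\ell$ comparison is the step I expect to be the real obstacle: it says that splitting the dominant weight $p$ among several coordinates never lowers $G$, and it is here that the hypothesis $p>\frac{m-1}{m}$ and the exact algebraic shape of $\mu_0,\mu_1$ must be used in full — presumably by establishing enough monotonicity in $\ell$, or by a direct smoothing/majorization argument on the two-valued configurations, so as to reduce everything to the already-settled case $\ell=1$.
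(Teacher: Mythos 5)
Your two reductions match the paper's: a minimizer has no vanishing coordinate (the boundary argument via the $-\infty$ directional derivative is essentially identical), and its positive coordinates take at most two distinct values (your Lagrange/unimodality argument is a variant of the paper's Lemma~\ref{Lem_3}, which casts the stationarity condition as $a-\ln s=C/\sqrt{s}$ having at most two roots). But the cross-$\ell$ comparison you flag at the end is the real obstacle, and you leave it open, so there is a genuine gap. The paper closes it with a device you do not have: it generalizes $\mu_0(p),\mu_1(p)$ to two-parameter functions $\mu_0(p,\alpha),\mu_1(p,\alpha)$ (equations \eqref{mu0(..)}--\eqref{mu1(..)}) and proves the stronger one-variable inequality of Lemma~\ref{Lem_2}, valid for all $\alpha\le\beta\le 1/2$ and $1-\alpha\le p<1$; taking $\alpha=1/m$, $\beta=k/m$ there treats every two-valued configuration uniformly. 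The mechanism inside Lemma~\ref{Lem_2} (case $B\ge 0$) is precisely a monotonicity statement of the kind you anticipate: $\frac{d}{d\beta}f(t_*)\ge 0$ for the minimum $t_*$ of the residual, together with the anchor identity $f(p)=0$ at $\beta=\alpha$, which is the $\ell=1$ configuration. Without introducing the extra parameter $\alpha$ and the extended $\mu_i(p,\alpha)$, your functions $\Phi_\ell$ are tied to the fixed constants $\mu_i(p)=\mu_i(p,1/m)$, and the inequality $\Phi_\ell(a_\ell)\ge-\mu_1(p)$ has no visible handle as $\ell$ varies. A secondary but nontrivial gap is your claimed strict monotonicity of $\Theta_\ell$ on $[1/m,1/\ell]$: it is asserted by analogy with the $m=2$ case but not established, whereas the paper avoids needing it by analyzing the second derivative of the residual directly (Lemma~\ref{Lem_1}), which bounds the number and placement of local minima without requiring a monotone $\Theta_\ell$.
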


The inequality (\ref{basic}) arises in connection with the problem of
accessible information for ensemble of $m$ equiprobable equiangular pure
states discussed in the next subsection. The suggestion for a proof of (\ref%
{basic}) will be given in section \ref{prroof}. Note that in the case $m=2$
the inequality (\ref{basic}) reduces to (\ref{ineq}).

For $m\geq 3$ the value $p=\frac{m-1}{m}$ is allowed giving the inequality
\begin{eqnarray}
-\sum_{j=1}^{m}t_{j}\log t_{j} &\geq &\frac{\log \left( m-1\right)}{m-2}
\left( \sum_{j=1}^{m}\sqrt{t_{j}}\right) ^{2}-\left[ \frac{m\log \left(
m-1\right)}{m-2} -\log m\right]\quad  \notag
\\
&=&\log m-\frac{m\log \left( m-1\right) }{m-2}\left[ 1-B(P;P_{U})^{2}\right]
,  \label{basic11}
\end{eqnarray}%
where%
\begin{equation*}
B(P;P_{U})=\sum_{j=1}^{m}\sqrt{t_{j}/m}
\end{equation*}
is the Bhattacharyya coefficient between the probability distribution $%
P=\left( t_{1},\dots ,t_{m}\right) $ and the uniform distribution $%
P_{U}=\left( 1/m,\dots ,1/m\right) .$

The inequality (\ref{basic11}) then gives a lower bound for the Shannon
entropy in terms of a \textquotedblleft distance\textquotedblright\ between
the distribution $P$ and the uniform distribution $P_{U}$, related to the
Bhattacharyya coefficient. It can be compared with the lower bound of
theorem 3.11 in \cite{moser}:%
\begin{equation}
-\sum_{j=1}^{m}t_{j}\log t_{j}\geq \log m-\frac{m\log m}{2\left( m-1\right) }%
D(P;P_{U}),  \label{mos}
\end{equation}%
where $D(P;P_{U})=\sum_{j=1}^{m}\left\vert t_{j}-1/m\right\vert $ is the
total variation distance. There is well-known relation for any two
distributions $P, P^{\prime }$:
\begin{equation*}
2\left( 1-B(P;P^{\prime })\right) \leq D(P;P^{\prime })\leq 2\sqrt{%
1-B(P;P^{\prime })^{2}},
\end{equation*}%
which allows to obtain from (\ref{basic11}) a degraded version of (\ref{mos}%
) and vice versa. For $P$ close to $P_{U},$ the bound (\ref{basic11}) is
substantially better than (\ref{mos}) while close to degenerate
distributions with $B(P;P_{U})\approx \sqrt{1/m}$ the right-hand side of (%
\ref{basic11}) becomes negative which never happens for (\ref{mos}).

Similarly to (\ref{mos}), the inequality (\ref{basic11}) (as well as (\ref%
{basic})) can be rewritten as a tight upper bound for the classical relative
entropy (the Kullback-Leibler divergence)
\begin{equation}\label{hacute}
H(P||P_{U})\leq \frac{m\log \left( m-1\right) }{m-2}\left[ 1-B(P;P_{U})^{2}%
\right].
\end{equation}

\subsection{The conjecture: \textquotedblleft quantum
pyramids\textquotedblright}

\label{pyram}

Let $|e_{j}\rangle ,\,j=1,\dots ,m,$ be an orthonormal basis in $m-$%
dimensional Hilbert space.
\begin{equation*}
|e_{0}\rangle =\frac{1}{\sqrt{m}}\sum_{j=1}^{m}|e_{j}\rangle ,\quad \langle
e_{0}|e_{j}\rangle \equiv \frac{1}{\sqrt{m}}.
\end{equation*}%
Following Englert and \v{R}eh\'{a}\v{c}ek \ \cite{eng} we consider the
system of equiangular vectors -- the \textit{quantum pyramid}:
\begin{equation}
|\psi _{j}\rangle =\sqrt{mr_{1}}|e_{j}\rangle +\left( \sqrt{r_{0}}-\sqrt{%
r_{1}}\right) |e_{0}\rangle ,\quad j=1,\dots ,m,  \label{pyr}
\end{equation}%
\begin{equation}
r_{0},\,r_{1}\geq 0,\quad r_{0}+(m-1)r_{1}=1.  \label{r01}
\end{equation}%
Alternative parametrization was given in \cite{hel}, section VI.2, where
this configuration was related to \textquotedblleft orthogonal
signals\textquotedblright\ in communication theory:%
\begin{equation}
|\psi _{j}\rangle =a|e_{j}\rangle +b\sum_{k\neq j}|e_{k}\rangle ,\quad
j=1,\dots ,m,  \label{pyr1}
\end{equation}%
\begin{equation*}
a^{2}+(m-1)b^{2}=1,\quad a\geq \frac{1}{\sqrt{m}}.  \label{ab}
\end{equation*}%
The relation between the two parametrizations is%
\begin{equation}
a=\sqrt{mr_{1}}+\frac{\sqrt{r_{0}}-\sqrt{r_{1}}}{\sqrt{m}},\quad b=\frac{%
\sqrt{r_{0}}-\sqrt{r_{1}}}{\sqrt{m}},  \label{param}
\end{equation}%
\begin{equation}
1-r_{0}=(m-1)r_{1}=\frac{m-1}{m}(a-b)^2.  \label{param1}
\end{equation}

The cosine of angle between any two different edges is
\begin{equation}
\xi \equiv \langle \psi _{j}|\psi _{k}\rangle
=r_{0}-r_{1}=2ab+(m-2)b^{2}\,,\quad \,j\neq k,  \label{cosine}
\end{equation}%
thus the case $r_{0}>r_{1}\,(b>0)$ corresponds to the \textit{acute }($0<\xi
<1$), $r_{0}=r_{1}\,(b=0)$ -- to the \textit{orthogonal }($\xi =0$), and $%
r_{0}<r_{1}\,(b<0)$ -- to the \textit{obtuse} $\left( -\frac{1}{\left(
m-1\right) }<\xi <0\right) $ pyramids. The extreme cases are $%
r_{1}=0\,\left( a=b=\sqrt{\frac{1}{m}}\right) $ -- the \textit{ort} $|\psi
_{j}\rangle \equiv |e_{0}\rangle $ ($\xi =1$)$,$ and $r_{0}=0\,\left( a=%
\sqrt{\frac{m-1}{m}},b=-\sqrt{\frac{1}{m\left( m-1\right) }}\right) $ -- the
\textit{flat} $\left( \xi =-\frac{1}{\left( m-1\right) }\right) $ pyramid.

We will consider the ensemble $\mathcal{E}=\left\{ \pi _{j},\rho
_{j}\right\} _{j=1,\dots ,m}$ with $\pi _{j}=\frac{1}{m},\,\rho _{j}=|\psi
_{j}\rangle \langle \psi _{j}|$ with the average state%
\begin{equation}
\bar{\rho}\equiv \frac{1}{m}\sum_{j=1}^{m}|\psi _{j}\rangle \langle \psi
_{j}|=r_{1}I+\left( r_{0}-r_{1}\right) |e_{0}\rangle \langle e_{0}|.
\label{rhobar}
\end{equation}

\begin{Rem}
\label{obs} Operators of the form $c_{1}I+c_{0}|e_{0}\rangle \langle e_{0}|$
constitute Abelian algebra of operators invariant under the group of all
*-automorphisms generated by the orthogonal transformations leaving the
quantum pyramid invariant.
\end{Rem}

One can then easily check that%
\begin{equation}  \label{sqrt}
\bar{\rho}^{1/2}=\sqrt{r_{1}}I+\left( \sqrt{r_{0}}-\sqrt{r_{1}}\right)
|e_{0}\rangle \langle e_{0}|
\end{equation}%
and%
\begin{equation}
|\psi _{j}\rangle =\sqrt{m}\bar{\rho}^{1/2}|e_{j}\rangle .  \label{psie}
\end{equation}

In what follows we consider the solution for the accessible information $A(%
\mathcal{E})$ and its optimizing observable conjectured and supported
numerically in \cite{eng}. In the rest of this subsection and in the next
subsection we consider acute pyramids for which $r_{0}\geq r_{1}$ or,
equivalently, $b\geq 0.$ 
According to \cite{eng} the conjectured optimal observable belongs to the
family depending on the parameter $0\leq t\leq 1$ and has the form $\mathcal{%
M}=\left\{ M_{k};k=0,1,\dots ,m\right\} $ where $M_{k}=|\tilde{e}_{k}\rangle
\langle \tilde{e}_{k}|$%
\begin{equation}
|\tilde{e}_{k}\rangle =|e_{k}\rangle +\frac{t-1}{\sqrt{m}}|e_{0}\rangle
;\quad k=1,\dots ,m,  \label{etilde}
\end{equation}%
\begin{equation*}
|\tilde{e}_{0}\rangle =\sqrt{1-t^{2}}|e_{0}\rangle .
\end{equation*}%
The optimal value of the parameter is%
\begin{equation}
t_{a}(m)=\left\{
\begin{array}{cc}
1, & r_{0}\leq \frac{4(m-1)}{m^{2}} \\
\frac{2(m-1)}{m-2}\sqrt{\frac{r_{1}}{r_{0}}}, & r_{0}>\frac{4(m-1)}{m^{2}}%
\end{array}%
\right. ,  \label{t0}
\end{equation}%
giving the accessible information (eq. (36) in \cite{eng})%
\begin{equation}
A(\mathcal{E})=\left\{
\begin{array}{cc}
I^{(SRM)}, & r_{0}\leq \frac{4(m-1)}{m^2} \\
\frac{m\left( 1-r_{0}\right) }{m-2}\log (m-1), & r_{0}>\frac{4(m-1)}{m^2}%
\end{array}%
\right. ,  \label{acc}
\end{equation}%
where%
\begin{eqnarray}
I^{(SRM)} &=&\frac{1}{m}\left( \sqrt{r_{0}}+\left( m-1\right) \sqrt{r_{1}}%
\right) ^{2}\log \left( \sqrt{r_{0}}+\left( m-1\right) \sqrt{r_{1}}\right)
^{2}  \notag \\
&+&\frac{m-1}{m}\left( \sqrt{r_{0}}-\sqrt{r_{1}}\right) ^{2}\log \left(
\sqrt{r_{0}}-\sqrt{r_{1}}\right) ^{2} \notag 
\\
&=&\log m+a^{2}\log a^{2}+\left( m-1\right) b^{2}\log b^{2}.  \label{isrm1}
\end{eqnarray}%
Note that in the case $t_{a}(m)=1$ the conjectured information-optimal
observable $M_{k}=|{e}_{k}\rangle \langle {e}_{k}|,\,k=1,\dots ,m,$ is
optimal for the Bayes discrimination problem for the ensemble $\mathcal{E}$
\cite{hel}, sec. VI.2. It is called square-root measurement (SRM) since the
system $\left\{ |{e}_{k}\rangle \right\} $ is obtained from $\left\{ |{\psi }%
_{k}\rangle \right\} $ by applying inverse square root of its Gram matrix,
cf. (\ref{psie}).

\begin{Rem}
\label{chi} An upper bound and a benchmark for $A(\mathcal{E})$ is the
classical capacity of the c-q channel $j\rightarrow |\psi _{j}\rangle
\langle \psi_{j}|$:
\begin{equation*}  \label{cap}
C=\max_{\pi}H\left(\sum_{j=1}^{m}\pi_j |\psi _{j}\rangle \langle
\psi_{j}|\right)= H(\bar{\rho}).
\end{equation*}
The last equality follows from the symmetry of the quantum pyramid and
concavity of the von Neumann entropy. According to (\ref{rhobar}) the
density operator $\bar{\rho}$ has the eigenvalues $r_{0}, r_{1}$, the last
of multiplicity $m-1$, hence
\begin{eqnarray*}
C &=&-r_{0}\log r_{0}-(m-1) r_{1}\log r_{1} \\
&=& - \left(1-\frac{m-1}{m}(a-b)^2\right)\log\left(1-\frac{m-1}{m}%
(a-b)^2\right) \\
&-&\frac{m-1}{m}(a-b)^2\log\frac{(a-b)^2}{m},
\end{eqnarray*}
where in the second equality we used (\ref{param1}).

Unlike the case of accessible information, this expression is the same for
all values of the parameters, for acute, obtuse and flat pyramids. The ratio
$C/A(\mathcal{E})$ gives the gain due to the measurement entanglement at the
channel's output. In particular, it tends to the infinity as the pyramid
collapses to ort (weak signal) and equals to 1 for rectangular pyramid
(ideal channel).
\end{Rem}

\subsection{Checking the optimality conditions: the case of acute pyramids}

\label{acute}

Let us apply our approach of section \ref{aopt}. By using (\ref{psie}) the
dual observable $\mathcal{M}^{\prime }$ has the components
\begin{equation}
M_{j}^{\prime }=\overline{\rho }^{-1/2}\left( \frac{1}{m}|\psi _{j}\rangle
\langle \psi _{j}|\right) \overline{\rho }^{-1/2}=|e_{j}\rangle \langle
e_{j}|,  \label{dualobs}
\end{equation}%
hence%
\begin{equation*}
K\left( \sigma \right) =-\sum_{j}|e_{j}\rangle \langle e_{j}|\,\log \mathrm{%
\,}\,\langle e_{j}|\sigma |e_{j}\rangle .
\end{equation*}%
From (\ref{etilde}), (\ref{sqrt}) we obtain the \textit{dual} ensemble $%
\mathcal{E}^{\prime }=\left\{ p_{k},\sigma _{k}\right\} $ with
\begin{equation*}
\sigma _{k}=p_{k}^{-1}|\tilde{\phi}_{k}\rangle \langle \tilde{\phi}%
_{k}|\,,\quad p_{k}=\langle \tilde{\phi}_{k}|\tilde{\phi}_{k}\rangle
\end{equation*}%
where
\begin{eqnarray*}
|\tilde{\phi}_{k}\rangle &=&\overline{\rho }^{1/2}|\tilde{e}_{k}\rangle =%
\sqrt{r_{1}}|e_{k}\rangle +\frac{\sqrt{r_{0}}t-\sqrt{r_{1}}}{\sqrt{m}}%
|e_{0}\rangle ; \\
p_{k} &=&r_{1}+\frac{r_{0}t^{2}-r_{1}}{m};\quad k=1,\dots ,m;
\end{eqnarray*}%
\begin{equation*}
|\tilde{\phi}_{0}\rangle =\overline{\rho }^{1/2}|\tilde{e}_{0}\rangle =\sqrt{%
r_{0}\left( 1-t^{2}\right) }|e_{0}\rangle ;\quad p_{0}=r_{0}\left(
1-t^{2}\right) .
\end{equation*}

\bigskip Substitution gives%
\begin{equation*}
K\left( \sigma _{k}\right) =-\sum_{j}|e_{j}\rangle \langle e_{j}|\,\log
p_{k}^{-1}\mathrm{\,}\,\left\vert \langle e_{j}|\tilde{\phi}_{k}\rangle
\right\vert ^{2},
\end{equation*}%
hence%
\begin{equation*}
K\left( \sigma _{k}\right) =\left\{
\begin{array}{cc}
-\sum_{j}|e_{j}\rangle \langle e_{j}|\,\log p_{k}^{-1}\mathrm{\,}%
\,\left\vert \sqrt{r_{1}}\delta _{jk}+\frac{\sqrt{r_{0}}t-\sqrt{r_{1}}}{m}%
\right\vert ^{2}, & k\geq 1 \\
-\sum_{j}|e_{j}\rangle \langle e_{j}|\,\log \frac{1}{m}=\left( \log m\right)
I, & k=0%
\end{array}%
\right. .
\end{equation*}%
Having in mind checking the optimality condition (ii.b) we compute%
\begin{eqnarray*}
K\left( \sigma _{k}\right) |\tilde{\phi}_{k}\rangle
&=&-\sum_{j}|e_{j}\rangle \left( \sqrt{r_{1}}\delta _{jk}+\frac{\sqrt{r_{0}}%
t-\sqrt{r_{1}}}{m}\right) \,\log \,\frac{\left\vert \sqrt{r_{1}}\delta _{jk}+%
\frac{\sqrt{r_{0}}t-\sqrt{r_{1}}}{m}\right\vert ^{2}}{r_{1}+\frac{%
r_{0}t^{2}-r_{1}}{m}}  \notag \\
&=&A(t)|e_{k}\rangle +B(t)\sqrt{m}|e_{0}\rangle ;\quad k=1,\dots ,m,
\label{AB}
\end{eqnarray*}%
where%
\begin{eqnarray}
A(t) &=&-\left( \sqrt{r_{1}}+\frac{\sqrt{r_{0}}t-\sqrt{r_{1}}}{m}\right)
\,\log \,\frac{\left\vert \sqrt{r_{1}}+\frac{\sqrt{r_{0}}t-\sqrt{r_{1}}}{m}%
\right\vert ^{2}}{r_{1}+\frac{r_{0}t^{2}-r_{1}}{m}}-B(t),  \label{AB1} \\
B(t) &=&-\left( \frac{\sqrt{r_{0}}t-\sqrt{r_{1}}}{m}\right) \,\log \,\frac{%
\left\vert \frac{\sqrt{r_{0}}t-\sqrt{r_{1}}}{m}\right\vert ^{2}}{r_{1}+\frac{%
r_{0}t^{2}-r_{1}}{m}}  \label{AB2}
\end{eqnarray}%
and
\begin{equation*}
K\left( \sigma _{0}\right) |\tilde{\phi}_{0}\rangle =\sqrt{r_{0}\left(
1-t^{2}\right) }\log m\,|e_{0}\rangle .
\end{equation*}

Motivated by the symmetry of the problem (discrete rotations around the axis
$|e_{0}\rangle $) and by the remark \ref{obs} 
we look for $\Lambda _{0}$ in the form%
\begin{equation}
\Lambda _{0}=c_{1}I+c_{0}|e_{0}\rangle \langle e_{0}|.  \label{Lambda}
\end{equation}%
Then%
\begin{eqnarray*}
\Lambda _{0}|\tilde{\phi}_{k}\rangle &=&c_{1}\sqrt{r_{1}}|e_{k}\rangle +%
\frac{\left( c_{0}+c_{1}\right) \sqrt{r_{0}}t-c_{1}\sqrt{r_{1}}}{\sqrt{m}}%
|e_{0}\rangle ;\,k\geq 1, \\
\Lambda _{0}|\tilde{\phi}_{0}\rangle &=&\left( c_{0}+c_{1}\right) \sqrt{%
r_{0}\left( 1-t^{2}\right) }|e_{0}\rangle .
\end{eqnarray*}%
Comparing coefficients in the condition (ii.b): $\Lambda _{0}|\tilde{\phi}%
_{k}\rangle =K\left( \sigma _{k}\right) |\tilde{\phi}_{k}\rangle $, we
obtain the system of equations%
\begin{eqnarray}
c_{1}\sqrt{r_{1}} &=&A(t),  \notag \\
\left( c_{0}+c_{1}\right) \sqrt{r_{0}}t-c_{1}\sqrt{r_{1}} &=&mB(t),
\label{sys} \\
\left( c_{0}+c_{1}\right) \sqrt{1-t^{2}} &=&\log m\sqrt{1-t^{2}}.  \notag
\end{eqnarray}

Two cases are possible: $t=1$ and $t<1.$ In the first case the third
equation disappears while the first two turn into%
\begin{eqnarray*}
c_{1}\sqrt{r_{1}} &=&\frac{b\log b^{2}-a\log a^{2}}{\sqrt{m}}, \\
\left( c_{0}+c_{1}\right) \sqrt{r_{0}}-c_{1}\sqrt{r_{1}} &=&-\sqrt{m}b\log
b^{2},
\end{eqnarray*}%
where we have used (\ref{AB1}), (\ref{AB2}) and (\ref{param}). Taking into
account the relations (\ref{param}), we have $\sqrt{mr_{1}}=a-b,$ and the
solution is%
\begin{eqnarray}
c_{1} &=&\frac{b\log b^{2}-a\log a^{2}}{a-b},  \notag \\
c_{0} &=&m\frac{ab\left( \log a^{2}-\log b^{2}\right) }{\left( a-b\right) %
\left[ a+\left( m-1\right) b\right] }.  \label{t1}
\end{eqnarray}

In the case $t<1$ the third equation in the system (\ref{sys}) reduces to $%
c_{0}+c_{1}=\log m.$ Eliminating $c_{0},c_{1}$ one obtains the equation for $%
t:$%
\begin{equation}
A(t)+mB(t)=\left( \log m\right) \sqrt{r_{0}}t.  \label{eq}
\end{equation}%
Substituting the expressions (\ref{AB1}), (\ref{AB2}) for $A(t),B(t)$ and
introducing the parameter%
\begin{equation*}
\tau =\sqrt{\frac{r_{0}}{r_{1}}}t,
\end{equation*}%
we can reduce (\ref{eq}) to the form%
\begin{equation}
\tau \log \left( m-1+\tau ^{2}\right)=\frac{m-1+\tau }{m}\log \left(
m-1+\tau \right) ^{2}+\frac{m-1}{m}\left( \tau -1\right) \log \left( \tau
-1\right) ^{2}.  \label{eqt1}
\end{equation}
Denoting $s=\tau ^{-1}$ equation (\ref{eqt1}) can be transformed into%
\begin{eqnarray*}
f(s)&\equiv &\ln \left[ 1+\left( m-1\right) s^{2}\right] \\
&-&2\frac{1+\left( m-1\right) s}{m}\ln \left[ 1+\left( m-1\right) s\right] -2%
\frac{m-1}{m}\left( 1-s\right) \ln \left\vert 1-s\right\vert =0.
\end{eqnarray*}%
At the point $s=0$ (corresponding to $\tau \rightarrow \infty $) this
function and its first and second derivatives vanish, while $f^{\prime
\prime \prime }(s)=2\left( m-1\right) \left( m-2\right) >0$ hence $f$ is
increasing and positive for small $s$. But $f(1)=-\ln m<0,$ hence there is a
(unique) root in the interval $(0,1],$ namely $s=\frac{m-2}{2\left(
m-1\right) }.$ There are no roots for $s>1$ because the function is
decreasing.

Thus the value $\tau _{a}(m)=\frac{2(m-1)}{m-2}$ suggested by (\ref{t0}) is
the unique positive solution of (\ref{eq}). Substituting this value in the
first equation of (\ref{sys}) and using the third equation we obtain%
\begin{equation}
c_{1}=\log m-\frac{m\log \left( m-1\right) }{m-2},\quad c_{0}=\frac{m\log
\left( m-1\right) }{m-2},  \label{c10}
\end{equation}%
the values which do not depend on the parameters of the pyramid $%
r_{0},r_{1}. $ The criterion for the applicability of this second case is
\begin{equation*}
t_{a}(m)\equiv \sqrt{\frac{r_{1}}{r_{0}}}\tau _{a}(m)=\sqrt{\frac{r_{1}}{%
r_{0}}}\frac{2(m-1)}{m-2}<1
\end{equation*}
which, via (\ref{r01}), is equivalent to $r_{0}>\frac{4(m-1)}{m^{2}}$, and,
via (\ref{param}), to $a^{2}<\frac{m-1}{m}.$

Correspondingly, the condition for applicability of the solution (\ref{t1})
is $a^{2}\geq \frac{m-1}{m}.$ By introducing the parameter $p$ such that
\begin{equation}  \label{pp}
p=a^{2}, \quad \frac{1-p}{m-1}=b^2,
\end{equation}
we have $b=\sqrt{\frac{1-p}{m-1}},$ leading to%
\begin{equation*}
\Lambda _{0}=c_{1}I+c_{0}|e_{0}\rangle \langle e_{0}|=-\mu _{1}(p)I+\mu
_{0}(p)m|e_{0}\rangle \langle e_{0}|
\end{equation*}%
with $\mu _{1}(p),\mu _{0}(p)$ given by (\ref{mu1}), (\ref{mu0}) under the
condition that $p\geq \frac{m-1}{m}$. Introducing complex numbers $%
z_{j}=\langle e_{j}|\psi \rangle ,$ so that $\sum_{j=1}^{m}\left\vert
z_{j}\right\vert ^{2}=1,$ the optimality condition (ii.a) amounts to the
inequality%
\begin{equation*}
-\sum_{j=1}^{m}\left\vert z_{j}\right\vert ^{2}\log \left\vert
z_{j}\right\vert ^{2}\geq \mu _{0}(p)\left\vert
\sum_{j=1}^{m}z_{j}\right\vert ^{2}-\mu _{1}(p).  \label{basicc}
\end{equation*}%
Since $\mu _{0}(p)\geq 0,$ this is equivalent to (\ref{basic}) where $%
t_{j}=\left\vert z_{j}\right\vert ^{2}$ .

Remarkably, inserting the limiting value $p=\frac{m-1}{m}$ we obtain for $%
m>2 $
\begin{equation*}
\mu _{1}=\frac{m\log \left( m-1\right)}{m-2} -\log m,\quad \mu _{0}=\frac{%
\log \left( m-1\right)}{m-2} ,
\end{equation*}%
which correspond to the values $c_{0},c_{1},$ given by (\ref{c10}) and
obtained by solving (\ref{sys}) in the case $t_{a}(m)<1.$

Finally, from (\ref{Lambda}) and (\ref{rhobar}) $\mathrm{Tr\,}\Lambda _{0}%
\bar{\rho}=c_{1}+c_{0}r_{0}$ so that
\begin{equation}  \label{access}
A(\mathcal{E})=\log m-c_{1}-c_{0}r_{0}.
\end{equation}
Thus we have the following corollary to the theorem \ref{th2}:

\begin{Cor}
\label{cor1} For ensemble of acute pyramids the information-optimal
observable is given by the equations (\ref{etilde}), (\ref{t0}) as
conjectured in \cite{eng}. The accessible information is given by the
expressions
\begin{equation}  \label{accinf}
A(\mathcal{E})=\left\{%
\begin{array}{cc}
\log m + p\log p +\left( 1-p\right)\log \frac{1-p}{m-1}, & p\in \left[\frac{%
m-1}{m}, 1\right] \\
\frac{m-1}{m-2}\log (m-1)\left( \sqrt{p}-\sqrt{\frac{1-p}{m-1}}\right)^2, &
p\in \left[\frac{1}{m}, \frac{m-1}{m}\right]%
\end{array}%
\right.
\end{equation}
which agree with (\ref{acc}).
\end{Cor}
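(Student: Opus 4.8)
The plan is to deduce Corollary \ref{cor1} directly from the machinery assembled in subsection \ref{acute}, using Theorem \ref{th1} and Theorem \ref{th2} as the two pillars. First I would recall that the dual observable $\mathcal{M}'$ with components $M_j'=|e_j\rangle\langle e_j|$ was computed in (\ref{dualobs}), and that the conjectured dual ensemble $\mathcal{E}_0'$ and the operator $\Lambda_0=c_1 I+c_0|e_0\rangle\langle e_0|$ were produced so that the optimality condition (ii.b) of Theorem \ref{th1} holds by construction --- this is exactly what the system (\ref{sys}) and its solutions (\ref{t1}), (\ref{c10}) accomplish in each of the two regimes $a^2\ge\frac{m-1}{m}$ and $a^2<\frac{m-1}{m}$. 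So the only condition left to verify is (ii.a), namely $\Lambda_0\le K(\sigma)$ for all $\sigma\in\mathfrak{S}$; by the argument given after (\ref{ii2}) in the proof of Theorem \ref{th1}, this is equivalent to the entropy inequality (\ref{ent}), which in the present setting (after writing $z_j=\langle e_j|\psi\rangle$, $t_j=|z_j|^2$) becomes precisely (\ref{basicc}). Since $\mu_0(p)\ge0$ one has $\bigl|\sum_j z_j\bigr|^2\le\bigl(\sum_j|z_j|\bigr)^2=\bigl(\sum_j\sqrt{t_j}\bigr)^2$, so (\ref{basicc}) follows from (\ref{basic}) in Theorem \ref{th2}. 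This is the crux: Theorem \ref{th2} is exactly the entropy inequality demanded by condition (ii.a).

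Having both (ii.a) and (ii.b), Theorem \ref{th1} (equivalence of (i), (ii), (iii)) yields that the conjectured ensemble $\mathcal{E}_0'$ solves the minimization problem (\ref{min}) and that $\Lambda_0$ solves the dual problem (\ref{dprob}). I would then invoke the correspondence between the optimal dual ensemble and the optimal observable: via the duality in (\ref{ae1})--(\ref{ae2}), the optimality of $\mathcal{E}_0'$ with components $\sigma_k^0=p_k^{-1}|\tilde\phi_k\rangle\langle\tilde\phi_k|$ translates, through $|\tilde\phi_k\rangle=\bar\rho^{1/2}|\tilde e_k\rangle$ and formula (\ref{mprim})/(\ref{dualobs}), into the optimality of the primal observable $M_k=|\tilde e_k\rangle\langle\tilde e_k|$ with the $|\tilde e_k\rangle$ of (\ref{etilde}) and the parameter value $t=t_a(m)$ from (\ref{t0}). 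The case split $t=1$ versus $t<1$ that determined which of (\ref{t1}), (\ref{c10}) applies is exactly the split $r_0\le\frac{4(m-1)}{m^2}$ versus $r_0>\frac{4(m-1)}{m^2}$ recorded in (\ref{t0}), as was checked right after (\ref{eqt1}); so the optimal observable is as conjectured in \cite{eng}.

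For the value of $A(\mathcal{E})$, I would start from (\ref{acces}): $A(\mathcal{E})=H(\pi)-\mathrm{Tr\,}\bar\rho\Lambda_0=\log m-\mathrm{Tr\,}\bar\rho\Lambda_0$ since $\pi$ is uniform. Using $\Lambda_0=c_1I+c_0|e_0\rangle\langle e_0|$ and $\bar\rho=r_1I+(r_0-r_1)|e_0\rangle\langle e_0|$ from (\ref{rhobar}), together with $\langle e_0|\bar\rho|e_0\rangle=r_0$, gives $\mathrm{Tr\,}\bar\rho\Lambda_0=c_1+c_0r_0$, i.e.\ (\ref{access}). In the regime $p=a^2\in[\frac{m-1}{m},1]$ I substitute the $t=1$ values $c_1,c_0$ from (\ref{t1}), re-expressed through $\mu_0(p),\mu_1(p)$ as $\Lambda_0=-\mu_1(p)I+\mu_0(p)m|e_0\rangle\langle e_0|$; then $c_1+c_0r_0=-\mu_1(p)+\mu_0(p)mr_0=-\mu_1(p)+\mu_0(p)\bigl(\sqrt p+(m-1)\sqrt{\frac{1-p}{m-1}}\bigr)^2$ (using $mr_0=(\sqrt{r_0}+(m-1)\sqrt{r_1})^2$ from (\ref{isrm}) and the identification $\sqrt{r_0}+(m-1)\sqrt{r_1}=\sqrt m\,a=\sqrt m\,b+\dots$; more directly $mr_0$ equals $\bigl(\sqrt p+(m-1)\sqrt{\tfrac{1-p}{m-1}}\bigr)^2$ after the substitution (\ref{pp})), which collapses by the definitions (\ref{mu0}), (\ref{mu1}) to $-\bigl(p\log p+(1-p)\log\frac{1-p}{m-1}\bigr)$, yielding the first line of (\ref{accinf}). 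In the regime $p\in[\frac1m,\frac{m-1}{m}]$ I substitute the $p$-independent $c_1,c_0$ from (\ref{c10}): $c_1+c_0r_0=\log m-\frac{m\log(m-1)}{m-2}+\frac{m\log(m-1)}{m-2}r_0=\log m-\frac{m(1-r_0)}{m-2}\log(m-1)$; since $1-r_0=(m-1)r_1$ and $m(1-r_0)=m(m-1)r_1=\frac{m-1}{m-2}\cdot(m-2)mr_1$, while $(m-2)mr_1/ m\cdot\dots$ --- more cleanly, $1-r_0=(m-1)r_1$ and one checks $\frac{m}{m-2}(1-r_0)=\frac{m-1}{m-2}\bigl(\sqrt p-\sqrt{\tfrac{1-p}{m-1}}\bigr)^2$ using $mr_1=(\sqrt{r_0}-\sqrt{r_1})^2$ from (\ref{isrm}) and (\ref{pp}) --- this gives $A(\mathcal{E})=\frac{m-1}{m-2}\log(m-1)\bigl(\sqrt p-\sqrt{\frac{1-p}{m-1}}\bigr)^2$, the second line of (\ref{accinf}). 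Comparison with (\ref{acc}) is then immediate: the second line matches $\frac{m(1-r_0)}{m-2}\log(m-1)$ directly, and the first line equals $I^{(SRM)}$ via the identity (\ref{isrm1}) with $a^2=p$, $b^2=\frac{1-p}{m-1}$.

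The main obstacle is conceptual rather than computational: it is ensuring that the purely algebraic construction of $\Lambda_0$ in subsection \ref{acute} --- which only forced condition (ii.b) --- is \emph{completed} into a genuine certificate of optimality, and that completion is precisely the content of Theorem \ref{th2}, whose own proof (deferred to section \ref{prroof}) is the real work. Granting Theorem \ref{th2}, the corollary is essentially a bookkeeping exercise: assemble (ii.a)+(ii.b), invoke Theorem \ref{th1}, translate the optimal dual ensemble back to the optimal primal observable, and simplify $\log m-(c_1+c_0r_0)$ in each regime. The only mild care needed is the algebraic verification that $mr_0$ and $mr_1$ translate correctly under the substitution (\ref{pp}) and match the groupings $\bigl(\sqrt p\pm\sqrt{\frac{1-p}{m-1}}\bigr)^2$ appearing in $\mu_0,\mu_1$ and in $I^{(SRM)}$, and that the boundary value $p=\frac{m-1}{m}$ gives consistent answers from both branches --- which was already noted in the text right before the statement of the corollary.
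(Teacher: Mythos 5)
Your proposal follows the same route as the paper: conditions (ii.b) and (ii.a) of Theorem~\ref{th1} are supplied by the computations of subsection~\ref{acute} and by Theorem~\ref{th2} respectively (using $\mu_0(p)\ge 0$ and $\bigl|\sum_j z_j\bigr|\le\sum_j|z_j|$ to pass from (\ref{basic}) to (\ref{basicc})), and then $A(\mathcal{E})=\log m-c_1-c_0r_0$ is simplified in each regime. One remark on the bookkeeping: the parenthetical identities ``$mr_0=(\sqrt{r_0}+(m-1)\sqrt{r_1})^2$'' and ``$mr_1=(\sqrt{r_0}-\sqrt{r_1})^2$'' are not correct (from (\ref{param}) one has $\sqrt{r_0}+(m-1)\sqrt{r_1}=\sqrt m\,a$ and $\sqrt{r_0}-\sqrt{r_1}=\sqrt m\,b$, so those squares equal $mp$ and $m(1-p)/(m-1)$, whereas $mr_0=(a+(m-1)b)^2$ and $mr_1=(a-b)^2$); fortunately your ``more directly'' versions state the right identities and the conclusion is unaffected.
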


\begin{proof} 
For accessible information we use the relation (\ref{access}). A routine calculation using
(\ref{t1}), (\ref{param}), (\ref{pp}) gives
(\ref{isrm1}) which is the first line in (\ref{accinf}) for moderately acute pyramids.
For strongly acute pyramid, using (\ref{c10}), one gets
\begin{equation*}
A(\mathcal{E})=\frac{m\left( 1-r_{0}\right) }{m-2}\log (m-1),
\end{equation*}
which is the second line, taking into account  (\ref{param1}).
\end{proof}

\subsection{The case of obtuse pyramids}

\label{obtus}

When $r_{0}<r_{1}$ $(b<0)$ the pyramid is obtuse, with extreme flat case $%
r_{0}=0$ when all the state vectors $|\psi _{j}\rangle $ lie in the
hyperplane orthogonal to $|e_{0}\rangle .$ The hypothetical optimal
observable is then drastically different. To describe it let us first
briefly survey the flat pyramid in the case $m=3$ where $|\psi _{j}\rangle
;j=1,2,3,$ are essentially the \textquotedblleft trine\textquotedblright\ of
equiangular unit vectors in two-dimensional real plane, cf. \cite{ippi}.
Then the optimal observable is $\mathcal{M}=\left\{ M_{k};k=1,2,3\right\} ,$
where $M_{k}=\frac{2}{3}|\psi _{k}^{\perp }\rangle \langle \psi _{k}^{\perp
}|,$ and $|\psi _{k}^{\perp }\rangle $ are the unit vectors in the plane
such that $\langle \psi _{k}^{\perp }|\psi _{k}\rangle =0;k=1,2,3$ \cite{sas}%
. In other words, $\mathcal{M}$ is \textquotedblleft unambiguous
discrimination\textquotedblright\ measurement (MUD) for the states $\rho
_{j}=|\psi _{j}\rangle \langle \psi _{j}|,\,j=1,2,3.$ In the section %
\ref{flat} we perform the analysis based on our optimality criterion showing
that the conditions (ii) of the theorem are fulfilled with $\Lambda _{0}=I,$
with the condition (ii.a) equivalent to the entropy inequality%
\begin{equation}
-\sum_{j=1}^{3}t_{j}\log t_{j}\geq 1;\quad t_{j}=\frac{2}{3}\cos ^{2}\left(
\alpha +\frac{2\pi j}{3}\right) ,  \label{trineq}
\end{equation}%
which is saturated for $\alpha =\pi /2+\pi k$.

Most interesting is the case of \textquotedblleft nearly
flat\textquotedblright\ trine when $0<r_{0}<\bar{r}_{0}\approx 0,0614.$
Then, as shown by P. Shor in \cite{shor}, the conjectured optimal observable
has 6 outcomes and is a statistical mixture of the MUD described above and
the SRM $M_{k}=|{e}_{k}\rangle \langle {e}_{k}|,\,k=1,2,3.$ The SRM becomes
optimal for obtuse pyramids with $\bar{r}_{0}\leq r_{0}\leq r_{1}=1/3.$

A generalization of this picture to the case $m\geq 3$ based on
numerical studies was suggested and elaborated in \cite{eng}. Let us apply
our optimality criterion to derive the corresponding entropy inequalities
and to obtain additional insight. We consider the ensemble $\mathcal{E}%
=\left\{ \pi _{j},\rho _{j}\right\} _{j=1,\dots ,m}$ with $\pi _{j}=\frac{1}{%
m},\,\rho _{j}=|\psi _{j}\rangle \langle \psi _{j}|,$ where $|\psi
_{j}\rangle $ are given by (\ref{pyr}) with $r_{0}<r_{1}$ $(b<0)$ . The
conjectured optimal observable $\mathcal{M}$ \cite{eng} has $m+\frac{m(m-1)}{%
2}=\frac{m(m+1)}{2}$ components of the form:
\begin{eqnarray}  \label{optm}
M_{k} &=&t^{-2}|\tilde{e}_{k}\rangle \langle \tilde{e}_{k}|;\quad k=1,\dots
,m; \\
M_{rs} &=&\frac{2}{m}\left( 1-t^{-2}\right) |rs\rangle \langle rs|;\quad
1\leq r<s\leq m,  \notag
\end{eqnarray}%
where $t\geq 1,$ the vectors $|\tilde{e}_{k}\rangle $ are given by (\ref%
{etilde}), and%
\begin{equation*}
|rs\rangle =\frac{1}{\sqrt{2mr_{1}}}\left( |\psi _{r}\rangle -|\psi
_{s}\rangle \right) =\frac{1}{\sqrt{2}}\left( |e_{r}\rangle -|e_{s}\rangle
\right) .
\end{equation*}%
Therefore the dual ensemble $\mathcal{E}^{\prime }$ consists of the states
(we use the same notation as for similar objects in the acute case):%
\begin{equation*}
\sigma _{k}=p_{k}^{-1}|\tilde{\phi}_{k}\rangle \langle \tilde{\phi}%
_{k}|\,,\quad p_{k}=\langle \tilde{\phi}_{k}|\tilde{\phi}_{k}\rangle ,
\end{equation*}%
where
\begin{eqnarray*}
|\tilde{\phi}_{k}\rangle &=&t^{-1}\overline{\rho }^{1/2}|\tilde{e}%
_{k}\rangle =t^{-1}\left( \sqrt{r_{1}}|e_{k}\rangle +\frac{\sqrt{r_{0}}t-%
\sqrt{r_{1}}}{\sqrt{m}}|e_{0}\rangle \right), \\
p_{k} &=&t^{-2}\left( r_{1}+\frac{r_{0}t^{2}-r_{1}}{m}\right) ;\quad
k=1,\dots ,m,
\end{eqnarray*}%
and%
\begin{equation*}
\sigma _{rs}=p_{rs}^{-1}|\tilde{\phi}_{rs}\rangle \langle \tilde{\phi}%
_{rs}|\,,\quad p_{rs}=\langle \tilde{\phi}_{rs}|\tilde{\phi}_{rs}\rangle ,
\end{equation*}%
where%
\begin{eqnarray*}
|\tilde{\phi}_{rs}\rangle &=&\sqrt{\frac{2}{m}\left( 1-t^{-2}\right) }\,%
\overline{\rho }^{1/2}|rs\rangle =\sqrt{\frac{r_{1}}{m}\left(
1-t^{-2}\right) }\,\left( |e_{r}\rangle -|e_{s}\rangle \right) ; \\
p_{rs} &=&\frac{2r_{1}}{m}\left( 1-t^{-2}\right) .
\end{eqnarray*}%
To check the optimality condition (ii.b) we proceed in parallel with the
acute case, and compute%
\begin{equation*}
K\left( \sigma _{k}\right) |\tilde{\phi}_{k}\rangle =t^{-1}\left(
A(t)|e_{k}\rangle +B(t)\sqrt{m}|e_{0}\rangle \right) ;\quad k=1,\dots ,m,
\end{equation*}%
where $A(t),B(t)$ are given by (\ref{AB1}), (\ref{AB2}). Also%
\begin{equation}
K\left( \sigma _{rs}\right) |\tilde{\phi}_{rs}\rangle
=-\sum_{j}|e_{j}\rangle \langle e_{j}|\tilde{\phi}_{rs}\rangle \,\log
p_{rs}^{-1}\mathrm{\,}\,\left\vert \langle e_{j}|\tilde{\phi}_{rs}\rangle
\right\vert ^{2},  \label{Krs}
\end{equation}%
with%
\begin{eqnarray*}
\langle e_{j}|\tilde{\phi}_{rs}\rangle &=&\sqrt{\frac{r_{1}}{m}\left(
1-t^{-2}\right) }\left( \delta _{jr}-\delta _{js}\right) , \\
\left\vert \langle e_{j}|\tilde{\phi}_{rs}\rangle \right\vert ^{2} &=&\frac{%
r_{1}}{m}\left( 1-t^{-2}\right) \left( \delta _{jr}+\delta _{js}\right) .
\end{eqnarray*}%
Substituting this into (\ref{Krs}) results in%
\begin{equation}
K\left( \sigma _{rs}\right) |\tilde{\phi}_{rs}\rangle =|\tilde{\phi}%
_{rs}\rangle ,\quad 1\leq r<s\leq m.  \label{chudo}
\end{equation}%
Looking again for $\Lambda _{0}=c_{1}I+c_{0}|e_{0}\rangle \langle e_{0}|,$
and comparing coefficients in the condition (ii.b): $\Lambda _{0}|\tilde{\phi%
}_{k}\rangle =K\left( \sigma _{k}\right) |\tilde{\phi}_{k}\rangle $, we
obtain the first two equations of the system (\ref{sys}), namely%
\begin{eqnarray}
c_{1}\sqrt{r_{1}} &=&A(t),  \label{sys1} \\
\left( c_{0}+c_{1}\right) \sqrt{r_{0}}t-c_{1}\sqrt{r_{1}} &=&mB(t),  \notag
\end{eqnarray}%
while $\Lambda _{0}|\tilde{\phi}_{rs}\rangle =K\left( \sigma _{rs}\right) |%
\tilde{\phi}_{rs}\rangle $ via (\ref{chudo}) amounts to
\begin{equation}
\left( 1-t^{-2}\right) \left( c_{1}-1\right) =0,  \label{sys2}
\end{equation}%
since $\langle e_{0}|\tilde{\phi}_{rs}\rangle \equiv 0.$

Again, two cases are possible: $t=1$ and $t>1.$ In the first case the third
equation disappears while the first two lead to the same formulas (\ref{t1})
as in the acute case, however with $a=\sqrt{p\text{ }}$, but $b=-\sqrt{\frac{%
1-p}{m-1}}.$ Then, denoting $z_{j}=\langle e_{j}|\psi \rangle $, the
condition (ii.a) brings us to the inequality:

\textit{For complex $z_{j}$ satisfying }$\mathit{\sum_{j=1}^{m}\left\vert
z_{j}\right\vert ^{2}=1,}$\textit{$\,$it holds
\begin{equation}
-\sum_{j=1}^{m}\left\vert z_{j}\right\vert ^{2}\log \left\vert
z_{j}\right\vert ^{2}\geq \tilde{\mu}_{0}(p)\left\vert
\sum_{j=1}^{m}z_{j}\right\vert ^{2}-\tilde{\mu}_{1}(p),  \label{basic2}
\end{equation}%
where
\begin{equation}
\tilde{\mu}_{0}(p)=\frac{\sqrt{\frac{p\left( 1-p\right) }{m-1}}\left( \log
p-\log \frac{1-p}{m-1}\right) }{\left( \left( m-1\right) \sqrt{\frac{1-p}{m-1%
}}-\sqrt{p}\right) \left( \sqrt{p}+\sqrt{\frac{1-p}{m-1}}\right) },
\label{0tilde}
\end{equation}%
\begin{equation}
\tilde{\mu}_{1}(p)=\frac{\sqrt{p}\log p+\sqrt{\frac{1-p}{m-1}}\log \frac{1-p%
}{m-1}}{\left( \sqrt{p}+\sqrt{\frac{1-p}{m-1}}\right) }.  \label{1tilde}
\end{equation}%
The equality in (\ref{basic2}) is attained for $z_{1}=\sqrt{p},\,z_{j}=-%
\sqrt{\frac{1-p}{m-1}},\,j=2,\dots ,m,$ and all the permutations and total change of phase of these $%
z_{j}$.}

Note that here $\tilde{\mu}_{0}(p)\leq 0,$ which prevents us from passing to
the probabilities $t_{j}=\left\vert z_{j}\right\vert ^{2}$. However, as the
lemma \ref{lemma} shows, we can always restrict to the case of real $z_{j}$.

In (\ref{basic2}) the parameter $p$ varies in the range $[\max \left\{ \frac{%
m-1}{m},p(m)\right\} ,1],$ where $p(m)$ is found below. The accessible
information in this case is given by the same expression (\ref{isrm1}) as in
the case of moderately acute pyramids. For future use we note that
\begin{equation}
-\tilde{\mu}_{1}\left( \frac{m-1}{m}\right) =\log m-\frac{m-2}{m}\log (m-1).
\label{val}
\end{equation}

If $t>1,$ then (\ref{sys2}) imply $c_{1}=1$ and equations (\ref{sys1})
reduce to
\begin{eqnarray}
\sqrt{r_{1}} &=&A(t),  \notag \\
c_{0} &=&\frac{m}{\sqrt{r_{0}}t}B(t)+\frac{1}{t}\sqrt{\frac{r_{1}}{r_{0}}}-1.
\label{c0}
\end{eqnarray}%
Here the first equation serves for determination of $t$ (compatible with the
condition $t>1$). By using (\ref{AB1}) it can be transformed to%
\begin{equation}
\log \frac{m\left( m-1+\tau ^{2}\right) }{2}-\frac{m-1+\tau }{m}\log \left(
m-1+\tau \right)^2 -\frac{\left( 1-\tau \right) }{m}\log \left( 1-\tau
\right)^2 =0,  \label{eqt2}
\end{equation}%
where $\tau =\sqrt{\frac{r_{0}}{r_{1}}}t.$

Denoting $g(\tau )$ the function in equation (\ref{eqt2}), we have%
\begin{equation*}
g(0)=\log \frac{m}{2}-\frac{m-2}{m}\log \left( m-1\right) \left\{
\begin{array}{cc}
>0, & m\leq 6; \\
<0, & m\geq 7,%
\end{array}%
\right. .
\end{equation*}%
The function is decreasing in the interval $[0,1]$ and $g(1)=-\log 2<0$,
hence it has a root $\tau_o(m)$ in this interval if and only if $m\leq 6.$
For $\tau >1$ the function is negative and has no roots.

The condition $t>1$ amounts to $\tau _{o}(m)>\sqrt{\frac{r_{0}}{r_{1}}},$
which via (\ref{r01}), is equivalent to
\begin{equation*}
r_{0}<\left( 1+\frac{m-1}{\tau _{o}(m)^{2}}\right) ^{-1}\equiv r_{0}(m),
\end{equation*}%
and, via (\ref{param}) and (\ref{pp}), to
\begin{eqnarray}
p\equiv a^{2}&<&\left[ 1+\left( m-1\right) \left( 1-\frac{m}{1-\tau _{o}(m)}
\right) ^{-2}\right] ^{-1}  \notag \\
&=&\frac{(m-1+\tau _{o}(m))^2}{m(m-1+\tau _{o}(m)^2)}\equiv p(m).
\label{cond}
\end{eqnarray}
One can check that equation (\ref{eqt2}) with $\tau=\tau _{o}(m)$ is
equivalent to
\begin{equation}  \label{val1}
-\tilde{\mu}_1(p(m))=1.
\end{equation}

The numerically found values\footnote{%
By using equation (\ref{eqt2}) it is possible to express the function $p(m)$
via the Lambert $W$-function.}:%
\begin{equation*}
\begin{tabular}{|l|l|l|l|l|l|l|}
\hline
$m$ & 2 & 3 & 4 & 5 & 6 & 7 \\ \hline
$\tau _{o}(m)$ & 0 & 0,3616 & 0,25866 & 0,15195 & 0,04781 & [-0,05286] \\
\hline
$mr_{0}(m)$ & 0 & 0,1841 & 0,08726 & 0,02869 & 0,00274 & -- \\ \hline
$p(m)$ & 0,5 & 0,8725 & 0,8656 & 0,85699 & 0,84896 & 0,8417 \\ \hline
$\frac{m-1}{m}$ & 0,5 & 0,6(6) & 0,75 & 0,8 & 0.83(3) & 0,8571 \\ \hline
\end{tabular}%
\end{equation*}%
Here $mr_{0}(m)$ show a good agreement with the values found empirically for
$m\leq 6$ in \cite{eng}, eq. (39). In the case $m=7$, as well as for greater
values of $m$, the solution of the equation (\ref{eqt2}) becomes negative
and also $p(m)<\frac{m-1}{m}.$ This makes plausible the hypothesis that for $%
m\geq 7 $ the optimal observable is SRM, and the corresponding entropy
inequality is (\ref{basic2}) for all $p\geq \frac{m-1}{m}.$

Then, for $m\leq 6$, under the condition (\ref{cond}) the equation (\ref%
{sys2}) implies $c_{1}=1$ while from (\ref{c0})%
\begin{equation*}
c_{0}=\left( \tau _{o}(m)^{-1}-1\right) \log \frac{2\left( \tau
_{o}(m)-1\right) ^{2}}{m\left( m-1+\tau _{o}(m)^{2}\right) }.
\end{equation*}%
The condition (ii.a) then amounts to the entropy inequality (\ref{basic2})
with $\tilde{\mu}_{0}=c_{0}/m,\quad \tilde{\mu}_{1}=-1$ i.e.
\begin{equation}
-\sum_{j=1}^{m}\left\vert z_{j}\right\vert ^{2}\log \left\vert
z_{j}\right\vert ^{2}\geq \frac{\left( \tau _{o}(m)^{-1}-1\right) }{m}\log
\frac{2\left( \tau _{o}(m)-1\right) ^{2}}{m\left( m-1+\tau
_{o}(m)^{2}\right) }\left\vert \sum_{j=1}^{m}z_{j}\right\vert ^{2}+1.
\label{ineqz}
\end{equation}%
Due to lemma \ref{lemma} one can restrict here to real $z_{j}$ , $%
\sum_{j=1}^{m}\left\vert z_{j}\right\vert ^{2}=1.$ The equality is attained
for $z_{1}=-z_{2}=1/\sqrt{2},z_{j}=0$, $j\geq 3$; for $z_{1}=\sqrt{p(m)}%
,\,z_{j}=-\sqrt{\frac{1-p(m)}{m-1}},\,j\geq 2$; and for all permutations
and total change of sign of
these $z_{j}.$

In this case the accessible information, by using (\ref{access}), is
\begin{eqnarray}  \label{assobt}
A(\mathcal{E}) &=&\log \frac{m}{2}-r_{0}\left( \tau _{o}(m)^{-1}-1\right)
\log \frac{2\left( \tau _{o}(m)-1\right) ^{2}}{m\left( m-1+\tau
_{o}(m)^{2}\right) } \\
&=& \log\dfrac{m}{2}-r_0m\tilde{\mu}_0(p(m)).  \notag
\end{eqnarray}

The following theorem summarizes the content of this section.

\begin{Th}
\label{th3} If $m\geq 7$ then $p(m)<\frac{m-1}{m}$ and the entropy
inequality (\ref{basic2}) holds for $p\in \left[\frac{m-1}{m}, 1\right]$. If
$m\leq 6$ then $p(m)>\frac{m-1}{m}$ and the inequality (\ref{basic2}) holds
for $p\in \left[p(m), 1\right]$, yielding the inequality (\ref{ineqz}) for $%
p\in \left[\frac{m-1}{m}, p(m)\right]$.
\end{Th}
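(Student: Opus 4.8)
This statement has two ingredients: the location of $p(m)$ relative to $\frac{m-1}{m}$, and the validity of the entropy inequalities (\ref{basic2}) and (\ref{ineqz}) on the stated $p$-ranges. The plan is to read off the first ingredient from the function $g(\tau)$ of (\ref{eqt2}) already studied above, and to reduce the second to a one-parameter entropy inequality of exactly the type settled for the Proposition of Section~\ref{sec:two}, the detailed verification being the object of Section~\ref{sec6}.

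For the dichotomy I would use the established behaviour of $g$: it is decreasing on $[0,1]$, $g(1)=-\log 2<0$, $g(0)=\log\frac m2-\frac{m-2}{m}\log(m-1)$, and $g$ has no root for $\tau>1$. If $m\le 6$ then $g(0)>0$, so there is a unique root $\tau_o(m)\in(0,1)$; substituting $1-\tau_o(m)\in(0,1)$ into the formula (\ref{cond}) and simplifying, the inequality $p(m)>\frac{m-1}{m}$ becomes equivalent to $1-\tau_o(m)<1$, which is immediate. If $m\ge 7$ then $g(0)<0$, hence $g<0$ on all of $[0,\infty)$, so the regime $t>1$ — which forces $\tau=\sqrt{r_0/r_1}\,t>0$ — cannot occur and the optimal observable is the SRM for every obtuse pyramid of dimension $m$. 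To obtain the displayed inequality I would first place $\tau_o(m)$ in $(-(m-1),0)$: indeed $g(0)<0$ while $g(-(m-1))=\log\frac{m-1}{2}>0$ (the singular term $\frac{m-1+\tau}{m}\log(m-1+\tau)^2$ vanishes at $\tau=-(m-1)$), with uniqueness on this interval by a monotonicity/convexity argument parallel to the one used for (\ref{eqt1}); then the same simplification of (\ref{cond}) turns $p(m)<\frac{m-1}{m}$ into $1-\tau_o(m)>1$, i.e. $\tau_o(m)<0$.

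For the entropy inequalities, observe that (\ref{basic2}) and (\ref{ineqz}) are instances of the scheme
\[
-\sum_{j=1}^m|z_j|^2\log|z_j|^2\ \ge\ \nu_0\,|z_1+\dots+z_m|^2-\nu_1,\qquad \nu_0\le 0,\quad \textstyle\sum_j|z_j|^2=1,
\]
and that, because $p(m)$ is the transition value at which the two solution branches of Section~\ref{obtus} meet, the coefficients $(\tilde\mu_0,\tilde\mu_1)$ at $p=p(m)$ equal $(c_0/m,-1)$; thus (\ref{ineqz}) is exactly (\ref{basic2}) at $p=p(m)$ and it suffices to establish (\ref{basic2}) for $p\in[\max\{\tfrac{m-1}{m},p(m)\},1]$. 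Since $\nu_0\le 0$, for fixed moduli the right side is largest when $|z_1+\dots+z_m|$ is smallest, and the minimum over phases equals $(\max\{0,\,2\sqrt{t_{\max}}-\sum_j\sqrt{t_j}\})^2$ with $t_j=|z_j|^2$, attained by aligning one amplitude against the remaining ones. This gives two cases. In Case~A, $2\sqrt{t_{\max}}\le\sum_j\sqrt{t_j}$, the worst right side is $-\nu_1$, and since the minimum of the Shannon entropy over this region is $1$ (attained at $t=(\tfrac12,\tfrac12,0,\dots,0)$, on the boundary $\sqrt{t_{\max}}=\sum_{j\ne\max}\sqrt{t_j}$), it suffices that $\nu_1\le 1$: for (\ref{ineqz}) this holds with equality ($\nu_1=-1$), while for (\ref{basic2}) it follows from $\tilde\mu_1(p(m))=-1$ (equivalently $c_1(p(m))=1$, with $\tilde\mu_1=-c_1$) together with the monotonicity of $\tilde\mu_1$, and for $m\ge 7$ from the direct estimate $\log m-\tfrac{m-2}{m}\log(m-1)\le 1$. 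In Case~B a Lagrange-multiplier computation — the stationarity equation $-\log t_j-\lambda t_j^{-1/2}=\mathrm{const}$ has at most two solutions — shows the extremal distributions take at most two distinct positive values, and, since the maximal value then cannot be repeated, the extremizer has one mass $a$ (necessarily $a>\frac{m-1}{m}$) and $m-1$ equal masses $\frac{1-a}{m-1}$; so Case~B collapses to the one-parameter inequality
\[
-a\log a-(1-a)\log\tfrac{1-a}{m-1}\ \ge\ \nu_0\big(\sqrt a-\sqrt{(m-1)(1-a)}\big)^2-\nu_1,\qquad \tfrac{m-1}{m}<a<1,
\]
which I would treat exactly as in the Proposition: its left side is the envelope of the right-hand family, tangency occurring at $a=p$ (resp.\ $a=p(m)$ for (\ref{ineqz})), after which the derivative of the difference has the sign of $a-p$ thanks to the monotonicity of an auxiliary function — the analogue of $\mu_0(t)$ — which a substitution such as $u=\sqrt{a/(1-a)}$ makes transparent.

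The main obstacle is this last step: proving the one-parameter inequality for all $m$, and, underpinning it, verifying the monotonicity of the $\mu_0$-type auxiliary function — especially in the ``frozen'' form $\nu_1=-1$ needed for (\ref{ineqz}). A secondary technical point is identifying $p(m)$ as the locus $c_1(p)=1$ and establishing the monotonicity of $\tilde\mu_1(p)=-c_1(p)$ used to close Case~A.
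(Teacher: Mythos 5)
Your proposal takes a genuinely different route from the paper in several places, and the comparison is instructive, but there is one step that remains genuinely open and one that is under-argued.

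The \emph{phase elimination} in your plan (minimize $|\sum z_j|$ over phases for fixed moduli via the triangle inequality, giving $\max\{0,2\sqrt{t_{\max}}-\sum\sqrt{t_j}\}$) is elegant and valid, and it replaces the paper's route, which instead derives from the Lagrange condition
$\tilde\mu_0(p)\bigl(x_k\sum y_j-y_k\sum x_j\bigr)=0$
the dichotomy that either $\sum z_j=0$ (handled by Proposition~\ref{prop1}) or the vectors $(x_j)$ and $(y_j)$ are proportional, hence the problem is essentially real. The two dichotomies are equivalent, but the paper's version feeds directly into a clean Lagrange analysis of the real problem. Your split costs you later: the Case~B objective $H(t)-\nu_0\bigl(2\sqrt{t_{\max}}-\sum\sqrt t_j\bigr)^2+\nu_1$ is non-smooth where the index of the maximum switches, and the boundary $t_j=0$ must be excluded carefully (you invoke the $-\infty$ slope of $H$, but the competing term $\propto t_j^{-1/2}$ dominates $\log t_j$ and both signs occur in the balance). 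The paper sidesteps all of this: its real-variable Lagrange functions $L_{(1)},L_{(2)}$ show the extremizer has at most two distinct nonzero values with arbitrary multiplicities $(k_1,k_2)$, $k_1+k_2=m$, and the corresponding entropy inequality is proved for all $\beta=k_1/m\in[\alpha,1/2]$ (Lemma~\ref{obtLem_2}). Your restriction to multiplicities $(1,m-1)$ is plausible (Case~B forces a strict unique maximum) but is not established; and even if correct, it is not an advantage, because you still have to prove the resulting one-parameter inequality.

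That one-parameter inequality is where the real gap is, and you say so yourself. You propose to repeat the envelope/tangency argument of the two-state Proposition, with the ``monotonicity of a $\mu_0$-type auxiliary function'' to be verified. This is substantially harder in the obtuse case (the tangency now has to be taken with the frozen coefficients at $p(m)$ for (\ref{ineqz}), and the sign pattern of the analogue of $\mu_0(t)$ is no longer as transparent). The paper avoids this entirely by the \emph{dominance reduction} of Lemma~\ref{obtLem_2}: the right-hand side with the minus sign,
$\tilde\mu_0(p,\alpha)\bigl(\sqrt{\beta t}-\sqrt{(1-\beta)(1-t)}\bigr)^2-\tilde\mu_1(p,\alpha)$,
is shown to be majorized by the acute-case right-hand side
$\mu_0(p,\alpha)\bigl(\sqrt{\beta t}+\sqrt{(1-\beta)(1-t)}\bigr)^2-\mu_1(p,\alpha)$,
after which the already-proved Lemma~\ref{Lem_2} (from the acute pyramid proof) finishes. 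That reduction is the content missing from your outline, and it is what makes the obtuse case tractable without re-running the envelope analysis.

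Two smaller points. In Case~A the sign slips (``$\nu_1\le 1$'' should be ``$-\nu_1\le 1$'', i.e.\ $\tilde\mu_1(p)\ge -1$), and the closing argument for $m\ge 7$ should use monotonicity of $\tilde\mu_1$ together with $\tilde\mu_1(\tfrac{m-1}{m})=-\bigl(\log m-\tfrac{m-2}{m}\log(m-1)\bigr)$ rather than the numerical bound by $1$; this monotonicity is exactly the content of the paper's Lemma~\ref{obtLem_1}, which you correctly flag as needed. Your verification of the dichotomy $p(m)\gtrless\tfrac{m-1}{m}$ via the sign of $g(0)$ and the root location of $g$ is sound and is in fact a bit more careful than what the paper writes out.
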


The suggestion for a proof of the inequalities combining analytical argument with
computer-aided reasoning is given in section \ref{sec:obtuse}.

\begin{Cor}
\label{cor2} In the case of obtuse pyramids the information-optimal
observable is given by the relations (\ref{optm}) with $t=1$ for $m\geq 7$,
and $t=\sqrt{\frac{r_1}{r_0}}\tau_{o}(m)$ for $m\leq 6$. The accessible
information is given by the expression (\ref{isrm1}) if $p\in [\max \left\{
\frac{m-1}{m},p(m)\right\} ,1],$ otherwise by (\ref{assobt}) if $m\leq 6$
and $p\in \left[\frac{m-1}{m}, p(m)\right]$.
\end{Cor}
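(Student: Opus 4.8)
The plan is to deduce Corollary~\ref{cor2} from Theorem~\ref{th3} together with the optimality criterion of Theorem~\ref{th1}, repeating for obtuse pyramids the argument already used for acute ones in Corollary~\ref{cor1}.

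First I would settle optimality of the observable. In Section~\ref{obtus} the Hermitian operator $\Lambda_0=c_1 I+c_0|e_0\rangle\langle e_0|$ was constructed precisely so that condition (ii.b) of Theorem~\ref{th1} holds: in the regime $t=1$ (which, by (\ref{optm}), is the SRM $M_k=|e_k\rangle\langle e_k|$) the coefficients are (\ref{t1}) with $a=\sqrt p$, $b=-\sqrt{\frac{1-p}{m-1}}$, obtained from (\ref{sys1}); in the regime $t>1$, valid for $m\leq 6$ and $p<p(m)$, one has $c_1=1$ and $c_0=(\tau_o(m)^{-1}-1)\log\frac{2(\tau_o(m)-1)^2}{m(m-1+\tau_o(m)^2)}$, obtained from (\ref{sys1})–(\ref{sys2}) with $\tau_o(m)$ the root of (\ref{eqt2}). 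Writing $z_j=\langle e_j|\psi\rangle$ and $\langle\psi|\Lambda_0|\psi\rangle=c_1+\frac{c_0}{m}|\sum_j z_j|^2$, condition (ii.a) becomes exactly the entropy inequality (\ref{basic2}) in the first regime and (\ref{ineqz}) in the second, and these are furnished by Theorem~\ref{th3}. The one point requiring care is to pair the regimes with the correct $p$-ranges: Theorem~\ref{th3} provides (\ref{basic2}) precisely for $p\geq\max\{\frac{m-1}{m},p(m)\}$ and (\ref{ineqz}) for $p\in[\frac{m-1}{m},p(m)]$ (nonempty only when $m\leq 6$), which is consistent because the condition $t>1$ was shown in Section~\ref{obtus} to be equivalent to $p<p(m)$. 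Invoking the equivalence (i)$\Leftrightarrow$(ii) of Theorem~\ref{th1}, the dual ensemble $\mathcal E'_0$ associated with (\ref{optm}) solves (\ref{min}), i.e. the observable (\ref{optm}) with the stated $t$ achieves $A(\mathcal E)$.

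Next I would compute the accessible information. By (\ref{acces}) together with (\ref{rhobar}), (\ref{Lambda}), $A(\mathcal E)=\log m-\mathrm{Tr}\,\bar\rho\Lambda_0=\log m-c_1-c_0 r_0$, using $\mathrm{Tr}\,\bar\rho=1$ and $\langle e_0|\bar\rho|e_0\rangle=r_0$; this is (\ref{access}). In the SRM regime one substitutes (\ref{t1}) and, exactly as in the proof of Corollary~\ref{cor1} (the computation does not depend on the sign of $b$), uses (\ref{param}), (\ref{pp}) to get $c_1+c_0 r_0=-a^2\log a^2-(m-1)b^2\log b^2$, hence $A(\mathcal E)=\log m+a^2\log a^2+(m-1)b^2\log b^2=\log m+p\log p+(1-p)\log\frac{1-p}{m-1}$, which is (\ref{isrm1}), valid on $p\in[\max\{\frac{m-1}{m},p(m)\},1]$. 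In the mixed regime ($m\leq 6$, $p\in[\frac{m-1}{m},p(m)]$) one plugs in $c_1=1$ and the above $c_0$, and since $\log m-1=\log\frac m2$ one obtains $A(\mathcal E)=\log\frac m2-r_0(\tau_o(m)^{-1}-1)\log\frac{2(\tau_o(m)-1)^2}{m(m-1+\tau_o(m)^2)}$, which is (\ref{assobt}). Finally one checks consistency at the junction $p=p(m)$ (both formulas coincide there, reflecting $-\tilde\mu_1(p(m))=1$) and agreement with the empirical values of \cite{eng}.

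There is no genuine obstacle inside the corollary itself: once Theorem~\ref{th3} is granted, it is a matter of matching regimes and routine algebra, as above. The real work sits in Theorem~\ref{th3}, whose proof (deferred to Section~\ref{sec6}) I would approach as follows: both (\ref{basic2}) and (\ref{ineqz}) have the form $-\sum_j|z_j|^2\log|z_j|^2+A|\sum_j z_j|^2\geq B$ with $A\geq 0$; since $A\geq 0$, the worst phases make $|\sum_j z_j|^2$ equal to $(u_{(1)}-\sum_{j\neq(1)}u_j)_{+}^{2}$ with $u_j=|z_j|$, $u_{(1)}=\max_j u_j$, reducing to a real-variable problem whose minimiser, by a convexity/Lagrange-multiplier analysis, has only two distinct nonzero magnitudes carried with opposite signs; this yields a finite family of one-variable inequalities dispatched by the tangency/envelope mechanism of Proposition~1, with the critical configuration $z_1=\sqrt p$, $z_{j\geq 2}=-\sqrt{\frac{1-p}{m-1}}$ giving equality. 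The subtlest step is the structural claim that the minimiser has this ``two-value-plus-zeros'' shape, and then controlling the family uniformly in $p$ so that the bound holds exactly down to $p=p(m)$ and no further; a subsidiary point is locating $\tau_o(m)$ (positive for $m\leq 6$, negative for $m\geq 7$) by extending the monotonicity analysis of (\ref{eqt2}) to $\tau<0$, which via (\ref{cond}) fixes the sign of $p(m)-\frac{m-1}{m}$.
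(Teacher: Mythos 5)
Your proof of Corollary~\ref{cor2} is correct and follows the paper's route: the Lagrangian $\Lambda_0=c_1 I+c_0|e_0\rangle\langle e_0|$ constructed in the obtuse subsection verifies (ii.b), Theorem~\ref{th3} supplies (ii.a) in the matching $p$-ranges, and $A(\mathcal E)=\log m-c_1-c_0 r_0$ reduces to (\ref{isrm1}) or (\ref{assobt}) by the same algebra as in the acute case. (Your closing sketch of how to prove Theorem~\ref{th3} is tangential to the corollary and differs in detail from the paper's argument, which reduces to real $z_j$ by Lagrange-multiplier proportionality of the real and imaginary parts rather than by explicit phase minimization, but this does not affect the corollary.)
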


\subsection{The case of flat pyramid}

\label{flat}

The considerations above formally do not include the case of flat pyramid
with $r_{0}=0,r_{1}=1/(m-1).$ Then the edges (\ref{pyr}) of the pyramid
\begin{equation*}
|\psi _{j}\rangle =\sqrt{\frac{m-1}{m}}|e_{j}\rangle -\sqrt{\frac{1}{m}}%
|e_{0}\rangle ;\quad j=1,\dots ,m
\end{equation*}%
all lie in the hyperplane $\mathcal{H}_{0}=\left\{ \psi :\langle e_{0}|\psi
\rangle =0\right\} $. The vectors are linearly dependent and their convex
hull is a regular symplex in $\mathcal{H}_{0}.$ The average state of
the ensemble $\mathcal{E}$
\begin{equation*}
\overline{\rho }=\frac{1}{m-1}\left( I-|e_{0}\rangle \langle e_{0}|\right)
\end{equation*}%
is degenerate and we should use generalized inverse in the formula (\ref%
{dualobs}), resulting in%
\begin{equation*}
M_{j}^{\prime }=|\tilde{\psi}_{j}\rangle \langle \tilde{\psi}_{j}|,\quad
j=1,\dots ,m,
\end{equation*}%
where
\begin{equation}
|\tilde{\psi}_{j}\rangle =|e_{j}\rangle -\sqrt{\frac{1}{m}}|e_{0}\rangle =%
\sqrt{\frac{m-1}{m}}|\psi _{j}\rangle.   \label{psitilde}
\end{equation}%
Denoting $z_{j}=\langle e_{j}|\psi \rangle $ we have
\begin{equation}
\sum_{j=1}^{m}\left\vert z_{j}\right\vert ^{2}=1,\quad \sum_{j=1}^{m}z_{j}=0.
\label{condw}
\end{equation}%
Indeed, the unit vector $|\psi \rangle $ lies in the hyperplane $\langle
e_{0}|\psi \rangle =0$i equality follows.

Arguing as in the previous section, we suggest that the optimal observable
is given by (\ref{optm}) with $t=1$ for $m\geq 7$ and $t=\infty $ for $m\leq
6$. The corresponding tight entropy inequality is
\begin{equation}
-\sum_{j=1}^{m}\left\vert z_{j}\right\vert ^{2}\log \left\vert
z_{j}\right\vert ^{2}\geq \left\{
\begin{array}{cc}
1, & m\leq 6; \\
\log m-\frac{m-2}{m}\log (m-1), & m\geq 7,%
\end{array}%
\right.  \label{ineqw}
\end{equation}%
for complex $z_{j}$ satisfying (\ref{condw}) \ (due to a simple modification
of lemma \ref{lemma} one can restrict here to real $z_{j}$ ). Indeed, for $%
m\leq 6$ we can use the argument leading to inequality (\ref{ineqz}) with $%
\sum_{j=1}^{m}z_{j}=0$ and $p=p(m)$. Then (\ref{val1}) implies the first
line in the right-hand side of (\ref{ineqw}). On the other hand, for $m\geq
7 $ we should use the inequality (\ref{basic2}) with $\sum_{j=1}^{m}z_{j}=0$
and $p=\frac{m-1}{m}.$ Then the first term in the righthand side vanish,
while $-\tilde{\mu}_{1}(p)\ $ gets the value (\ref{val}) 
i.e. the second line in the righthand side of (\ref{ineqw}), which is
greater than 1 for $m\leq 6$, but less for $m\geq 7$. We thus arrive at the
following special case of theorem \ref{th3}:%


\begin{prop}
\label{prop1} The tight entropy inequality (\ref{ineqw}) holds for real (or
complex) $z_{j}$ satisfying the conditions (\ref{condw}).

In the case $m\leq 6$ the equality in (\ref{ineqw}) is attained for $%
z_{1}=-z_{2}=1/\sqrt{2},\,z_{j}=0$ for $j\geq 3;$ in the case $m\geq 7$ --
for $z_{1}=\sqrt{\frac{m-1}{m}},\,z_{j}=-\sqrt{\frac{1}{\left( m-1\right) m}}%
,\,j\geq 2,$ (and for all permutations and total change of phase of such $z_{j}).$
\end{prop}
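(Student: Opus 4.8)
The plan is to obtain (\ref{ineqw}) as a direct specialization of Theorem \ref{th3} to the hyperplane $\sum_{j=1}^{m}z_{j}=0$ of (\ref{condw}); the substantive analytic content --- the proof of (\ref{basic2}) and of (\ref{ineqz}) --- is the business of section \ref{sec6}, which I take as given.

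\textbf{Case $m\geq 7$.} By Theorem \ref{th3}, $p(m)<\frac{m-1}{m}$, so (\ref{basic2}) is valid for every $p$ with $\frac{m-1}{m}<p\leq 1$. Fix complex $z_{j}$ satisfying (\ref{condw}). Since $\sum_{j}z_{j}=0$, the term $\tilde{\mu}_{0}(p)\,\bigl|\sum_{j}z_{j}\bigr|^{2}$ on the right of (\ref{basic2}) vanishes identically in $p$, whence $-\sum_{j=1}^{m}|z_{j}|^{2}\log|z_{j}|^{2}\geq-\tilde{\mu}_{1}(p)$ for each such $p$. Now let $p\downarrow\frac{m-1}{m}$: the function $\tilde{\mu}_{1}$ of (\ref{1tilde}) is continuous there (its denominator $\sqrt{p}+\sqrt{(1-p)/(m-1)}$ stays bounded away from $0$), and by the computation (\ref{val}) its limit gives $-\tilde{\mu}_{1}(\frac{m-1}{m})=\log m-\frac{m-2}{m}\log(m-1)$. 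This is the $m\geq 7$ branch of (\ref{ineqw}). (Passing to the limit bypasses the fact that $\tilde{\mu}_{0}(p)\to-\infty$ as $p\downarrow\frac{m-1}{m}$, which is harmless here because that coefficient multiplies the vanishing quantity $\bigl|\sum_{j}z_{j}\bigr|^{2}$.) For tightness, note that the equality configuration of (\ref{basic2}), $z_{1}=\sqrt{p}$ and $z_{j}=-\sqrt{(1-p)/(m-1)}$ for $j\geq 2$, has $\sum_{j}z_{j}=\sqrt{p}-\sqrt{(m-1)(1-p)}$, which is $0$ exactly when $p=\frac{m-1}{m}$; so the vector $z_{1}=\sqrt{(m-1)/m}$, $z_{j}=-\sqrt{1/(m(m-1))}$ lies in the feasible set (\ref{condw}), and by (\ref{val}) its entropy equals $\log m-\frac{m-2}{m}\log(m-1)$, so the bound is attained.

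\textbf{Case $m\leq 6$.} Here Theorem \ref{th3} furnishes (\ref{ineqz}), whose right-hand side has the form $(\text{const})\cdot\bigl|\sum_{j}z_{j}\bigr|^{2}+1$; imposing $\sum_{j}z_{j}=0$ kills the first term and leaves $-\sum_{j}|z_{j}|^{2}\log|z_{j}|^{2}\geq 1$, the $m\leq 6$ branch of (\ref{ineqw}). Tightness follows from the first extremal of (\ref{ineqz}), $z_{1}=-z_{2}=1/\sqrt{2}$, $z_{j}=0$ for $j\geq 3$, which obeys (\ref{condw}) and has entropy $h(1/2)=1$ (the second extremal of (\ref{ineqz}), with $z_{1}=\sqrt{p(m)}$, violates $\sum_{j}z_{j}=0$ and plays no role). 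The two branches really are different: for $m\leq 6$ one has $p(m)>\frac{m-1}{m}$, so Theorem \ref{th3} does \emph{not} assert (\ref{basic2}) at $p=\frac{m-1}{m}$, and indeed by (\ref{val}) the number $\log m-\frac{m-2}{m}\log(m-1)$ exceeds $1$ for $m\leq 6$, so (\ref{basic2}) at $p=\frac{m-1}{m}$ would contradict the tight point $z_{1}=-z_{2}=1/\sqrt{2}$. The statement for real $z_{j}$ needs nothing extra, complex subsuming real, and all extremal configurations above are already real.

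The hard part is not in this proposition at all --- it reduces in one line to Theorem \ref{th3} --- but in section \ref{sec6}, where (\ref{basic2}) and (\ref{ineqz}) are to be proved; the only points here requiring any attention are the endpoint degeneracy of $\tilde{\mu}_{0}$ at $p=\frac{m-1}{m}$ (handled by continuity of $\tilde{\mu}_{1}$) and the observation that the extremal vectors of (\ref{basic2}) and (\ref{ineqz}) happen to satisfy the extra linear constraint $\sum_{j}z_{j}=0$. Should one want a self-contained route, one could instead note that complex $z_{j}$ with $|z_{j}|^{2}=t_{j}$, $\sum_{j}t_{j}=1$, and $\sum_{j}z_{j}=0$ exist iff the polygon inequalities $\sqrt{t_{k}}\leq\sum_{j\neq k}\sqrt{t_{j}}$ hold for every $k$, and then minimize the concave entropy $-\sum_{j}t_{j}\log t_{j}$ over that region, the minimizers being the permutations of $(1/2,1/2,0,\dots,0)$ for $m\leq 6$ and of $((m-1)/m,1/(m(m-1)),\dots,1/(m(m-1)))$ for $m\geq 7$; but that path is longer and less in the spirit of the paper.
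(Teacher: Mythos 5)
Your deduction is internally consistent and the tightness verifications are correct, but the argument is circular. You treat Theorem~\ref{th3} as already established and specialize its conclusion to the hyperplane $\sum_j z_j = 0$; however, the paper's proof of Theorem~\ref{th3} in section~\ref{sec6} itself invokes Proposition~\ref{prop1}. Concretely, in ``Case 1'' of that proof (the critical-point branch $\sum_j z_j = 0$), the inequality (\ref{basic2}) reduces to $-\sum_j |z_j|^2\log|z_j|^2 \geq -\tilde{\mu}_1(p)$, and the paper establishes this precisely by citing Proposition~\ref{prop1} (to bound the entropy from below on the hyperplane) together with Lemma~\ref{obtLem_1} (to show $-\tilde{\mu}_1(p)$ is below that bound). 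So the dependency chain runs Prop.~\ref{prop1} $\Leftarrow$ Thm.~\ref{th3} $\Leftarrow$ Prop.~\ref{prop1}, and your proposal does not in fact close the argument. The paper avoids this by giving Proposition~\ref{prop1} a direct, self-contained proof: Lemma~\ref{lemma} reduces to real $z_j$ via convexity of $\xi\mapsto\xi\log\xi$, and then a Lagrange-multiplier analysis over $\sum z_j^2=1$, $\sum z_j=0$ shows any critical point has at most two distinct nonzero coordinate values (with a possible block of zeros), from which the two extremal configurations and the two-branch bound drop out by direct evaluation.

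Two smaller points. First, your handling of the endpoint $p=\frac{m-1}{m}$ (where $\tilde{\mu}_0$ diverges) by taking the limit $p\downarrow\frac{m-1}{m}$ and using continuity of $\tilde{\mu}_1$ is sound and worth noting, since the statement of Theorem~\ref{th3} writes a closed interval but the proof in section~\ref{sec6} really works with $p$ strictly above the threshold. Second, your sketched ``self-contained route'' via the polygon inequalities $\sqrt{t_k}\leq\sum_{j\neq k}\sqrt{t_j}$ and concavity of entropy is a reasonable idea in spirit, but as written it is only a sketch: the feasible region is convex but not a polytope, the claim that the stated points are the minimizers among extreme points is unproved, and the purported characterization of realizability of a prescribed $|z_j|^2$-profile with $\sum z_j=0$ by the polygon inequalities would need an argument. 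The paper's Lagrange computation is, in effect, the fleshed-out version of that route; if you intend this as an alternative proof, you must supply those details rather than defer to Theorem~\ref{th3}.
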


Theorem \ref{th1} then implies

\begin{Cor}
\label{cor3} For the ensemble of flat pyramid, information-optimal
observable is given by (\ref{optm}) with $t=1$ for $m\geq 7$ and $t=\infty$
for $m\leq 6$.

The corresponding accessible information is%
\begin{equation*}
A(\mathcal{E})=\log m-\mathrm{Tr\,}\Lambda _{0}\bar{\rho}=\left\{
\begin{array}{cc}
\log \frac{m}{2}, & m\leq 6; \\
\frac{m-2}{m}\log (m-1), & m\geq 7.%
\end{array}%
\right.
\end{equation*}
\end{Cor}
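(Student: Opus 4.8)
The strategy is to deduce Corollary~\ref{cor3} directly from the optimality criterion of Theorem~\ref{th1}, feeding in Proposition~\ref{prop1} as the required entropy inequality. The preparatory identifications have already been made in subsection~\ref{flat}: writing $z_j=\langle e_j|\psi\rangle$, for a unit vector $\psi$ in the range of $\bar\rho$ (the hyperplane $\langle e_0|\psi\rangle=0$, on which the constraints (\ref{condw}) hold) one has $\langle\psi|M_j'|\psi\rangle=|\langle\tilde\psi_j|\psi\rangle|^2=|z_j|^2$, so the left-hand side of (\ref{ent}) is exactly $-\sum_j|z_j|^2\log|z_j|^2$. Since $\bar\rho$ is degenerate, I would read the whole construction on that $(m-1)$-dimensional subspace, using generalized inverses in (\ref{mprim})--(\ref{dualobs}) as indicated in the footnote to (\ref{mprim}); on that subspace one takes $\Lambda_0$ to act as the identity when $m\le 6$, and as $\bigl(\log m-\tfrac{m-2}{m}\log(m-1)\bigr)$ times the identity when $m\ge 7$. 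With $\sum z_j=0$ these are precisely the values of $\langle\psi|\Lambda_0|\psi\rangle$ produced by the computations of subsection~\ref{flat}: the constant $1$, from the argument leading to (\ref{ineqz}); and $-\tilde\mu_1(\tfrac{m-1}{m})$, which equals (\ref{val}), from (\ref{basic2}) at $p=\tfrac{m-1}{m}$.

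With these choices, condition (iii.a) of Theorem~\ref{th1} is literally the tight inequality (\ref{ineqw}), which is Proposition~\ref{prop1}. For condition (iii.b), i.e.\ (\ref{enteq}), one checks that the saturating configurations of Proposition~\ref{prop1} coincide, after normalization, with the dual-ensemble vectors $|\phi_k^0\rangle$ of the conjectured observable (\ref{optm}). For $m\le 6$ ($t=\infty$) the surviving outcomes are the $M_{rs}$, and $\bar\rho^{1/2}|rs\rangle\propto|e_r\rangle-|e_s\rangle$, i.e.\ $z_r=-z_s=1/\sqrt2$ with the remaining coordinates zero; for $m\ge 7$ ($t=1$) one has $\tilde e_k=e_k$ and $\bar\rho^{1/2}|e_k\rangle\propto|\tilde\psi_k\rangle$, i.e.\ $z_k=\sqrt{(m-1)/m}$ and $z_l=-1/\sqrt{m(m-1)}$ for $l\ne k$. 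These are exactly the equality cases listed in Proposition~\ref{prop1}, and its "all permutations" clause covers the remaining outcomes, so (iii.b) holds at every $|\phi_k^0\rangle$.

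Having verified (iii.a) and (iii.b), the equivalence (iii)$\Leftrightarrow$(i) of Theorem~\ref{th1} gives that the conjectured observable (\ref{optm}) with $t=1$ for $m\ge 7$ and $t=\infty$ for $m\le 6$ solves the minimization problem (\ref{min}), hence by (\ref{ae2}) is the information-optimal observable for the flat pyramid, as conjectured in \cite{eng}, and that $A(\mathcal{E})=H(\pi)-\mathrm{Tr}\,\bar\rho\Lambda_0=\log m-\mathrm{Tr}\,\bar\rho\Lambda_0$. It then remains to evaluate the trace: $\mathrm{Tr}\,\bar\rho=1$, and for the flat pyramid $\langle e_0|\bar\rho|e_0\rangle=r_0=0$, so $\mathrm{Tr}\,\bar\rho\Lambda_0$ equals $1$ for $m\le 6$ and $\log m-\tfrac{m-2}{m}\log(m-1)$ for $m\ge 7$; substituting yields $A(\mathcal{E})=\log\tfrac m2$ and $A(\mathcal{E})=\tfrac{m-2}{m}\log(m-1)$, respectively.

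The real content of the corollary is Proposition~\ref{prop1}, which is already proved; beyond that the argument is essentially bookkeeping. The point I expect to require the most care is the degeneracy of $\bar\rho$: one must confirm that the conjectured family (\ref{optm}) still resolves the identity on the range of $\bar\rho$ in the limiting values $t=1$ and $t=\infty$ — it does, up to the $|e_0\rangle\langle e_0|$ component, which is immaterial here since all the states $|\psi_j\rangle$ lie in that hyperplane — and that the correct form of the criterion of Theorem~\ref{th1} to invoke is the one in which (\ref{ent}) is imposed only for $\psi$ in the range of $\bar\rho$ and $\Lambda_0$ is constrained only there. Once that is granted, what is left is the matching of saturating vectors above and the elementary trace computation.
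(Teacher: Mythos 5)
Your proof is correct and follows exactly the route the paper intends: the paper leaves Corollary~\ref{cor3} as an unargued consequence of Theorem~\ref{th1} together with Proposition~\ref{prop1}, and you have supplied the missing bookkeeping — identifying $\langle\psi|M_j'|\psi\rangle=|z_j|^2$ on $\supp\bar\rho$, reading off $\Lambda_0|_{\supp\bar\rho}$ as $c_1 I$ with $c_1=1$ (resp.\ $c_1=-\tilde\mu_1(\tfrac{m-1}{m})=\log m-\tfrac{m-2}{m}\log(m-1)$), matching the saturating configurations of Proposition~\ref{prop1} with the dual-ensemble vectors of (\ref{optm}) at $t=\infty$ (resp.\ $t=1$), and evaluating $\mathrm{Tr}\,\bar\rho\Lambda_0$. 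The one place you rightly flag as needing care — the degenerate $\bar\rho$, the residual $|e_0\rangle\langle e_0|$ outcome, and the restriction of the dual constraint to $\supp\bar\rho$ — you handle correctly and in the spirit of the paper's footnote on generalized inverses.
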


For $m=3$ the inequality (\ref{ineqw}) reduces to (\ref{trineq}). Indeed, as
shown in lemma \ref{lemma}, one can restrict to real $z_{j}$
 in (\ref{ineqw}). Then $|\psi \rangle $ is a vector in the real plane $\mathcal{H}_{0}$ and
by (\ref{psitilde})
\begin{equation*}
z_{j}=\langle e_{j}|\psi \rangle =\langle \tilde{\psi}_{j}|\psi \rangle =%
\sqrt{2/3}\langle \psi _{j}|\psi \rangle =\sqrt{2/3}\cos \alpha _{j},\quad
j=1,2,3,
\end{equation*}%
where $\alpha _{j}$ is the angle between the vector $|\psi \rangle $ and the
ensemble vector  $|\psi _{j}\rangle $ in the plane $\mathcal{H}_{0}.$ The
angle between any two different vectors of the ensemble is $2\pi /3$ because  $%
\langle \psi _{j}|\psi _{k}\rangle =-1/2$ according to (\ref{cosine}).
Therefore $\alpha _{j}=\alpha +\frac{2\pi j}{3},\quad j=1,2,3,$ thus we come
to (\ref{trineq}).

Let us also explain how (\ref{trineq}) follows from the result of the paper \cite{sas}%
. A simple calculation shows that  (\ref{trineq}) is equivalent to%
\begin{equation*}
\log 3-\frac{1}{3}\sum_{j=1}^{3}\left( 1+\cos \left( 2\alpha +\frac{4\pi j}{3%
}\right) \right) \log \left( 1+\cos \left( 2\alpha +\frac{4\pi j}{3}\right)
\right) \geq \log 2;
\end{equation*}%
A special case of lemma 3 in \cite{sas} (with $M=3$) implies that the
deductible in the lefthand side  is maximized for $\alpha =\pi /2+\pi k
$ (giving in fact the accessible information $\log \left( 3/2\right) $). The
proof in \cite{sas} is based on the decomposition of  the function $\left( 1+x\right) \log
\left( 1+x\right) $ into the power series and a clever manipulation with the
terms of the decomposition.

\section{Suggestions for proofs}

\label{prrrofs}

\subsection{The case of theorem \protect\ref{th2}}

\label{prroof}

Let $m\geq 2,\ \dfrac{m-1}{m}\leq p<1$ and let $\mu _{0}(p),\mu _{1}(p)$ be
given by (\ref{mu0}), (\ref{mu1}). 


\textit{For $p\in \lbrack (m-1)/m,1)$ and a probability distribution $%
(t_{1},...,t_{m})$, it holds
\begin{equation}
F(t)\equiv H(t)-\Bigl(\mu _{0}(p)\Bigl(\sum\limits_{k=1}^{m}\sqrt{t_{k}}%
\Bigl)^{2}-\mu _{1}(p)\Bigl)\geq 0,  \label{func_F}
\end{equation}%
where $H(t)=-\sum\limits_{k=1}^{m}t_{k}\log t_{k}$ is the Shannon entropy.}

In the special case $p=1$, the inequality \eqref{func_F} is trivially
satisfied by substituting $\mu_0(1)\equiv0,\ \mu_1(1)=0$. 

\begin{proof}  The function $F(t)$ cannot reach a minimum on the boundary of the
simplex of probability distributions, since the function 
\begin{equation*}
u(s)= -\sum\limits_{i=1}^m
(t_{*,i}+sa_i)\log(t_{*,i}+sa_i)-\mu_0(p)\Bigl(\sum\limits_{i=1}^m\sqrt{%
t_{*,i}+sa_i}\Bigl)+\mu_1(p)
\end{equation*}
for a boundary point $t_*$ and a transversal direction $a$ has the
derivative $-\infty $ at zero:
\begin{equation*}
u^{\prime}(0)=\lim\limits_{t\to t_*}\Bigg[-\sum\limits_{i\in I_0} a_i\log
t_{i} - C\Bigl(\sum\limits_{i\in I_0} \dfrac{a_i}{\sqrt{t_i}}\Bigl)+O(1)\Bigg],
\end{equation*}
where $I_0=\{1\leq j\leq m;\ t_{*,j}=0\}$, $C>0$ is a
constant. In this case, $a_i>0,\ i\in I_0$ (direction inside the simplex),
hence $u^{\prime}(0)=-\infty$.\end{proof}

Thus it is sufficient to check the inequality \eqref{func_F} only at the
critical points.

\begin{Lem}
\label{ac_Lem2} The critical points $t_*=(t_{*,1},...,t_{*,m})$ of the
function $F(t)$ 
have the form
\begin{equation}  \label{crit}
t_*=\Bigl(q/k,...,q/k,(1-q)/(m-k),...,(1-q)/(m-k)\Bigl)
\end{equation}
up to permutations, for some integer $k$, $1\leq k\leq m $ and $q\in[0,1]$
(for $k=m$ it is assumed that $q=1/m$). One of the critical points is $%
(p,(1-p)/(m-1),...,(1-p)/(m-1))$, at which the value of the function $F(t)$
is equal to $0$.
\end{Lem}

\begin{proof}  Consider the function  $%
u(s)=F(t+sa)$, where $t=(t_1,...,t_m),\ a=(a_1,...,a_m),\ \sum\limits_i a_i=0
$, for a range of $s$ such that $t+as$ defines a probability
distribution. Then
\begin{align*}
\dfrac{d}{ds}u(s)& =-\sum\limits_{i=1}^{m}a_{i}\log t_{i}-\mu _{0}(p)\Bigl(\sum\limits_{i=1}^{m}\dfrac{a_{i}}{\sqrt{t_{i}}}\Bigl)\Bigl(\sum\limits_{i=1}^{m}\sqrt{t_{i}}\Bigl) \\
& =\sum\limits_{i=2}^{m}a_{i}\Bigg[-\log t_{i}+\log t_{1}-\mu _{0}(p)\Bigl(\sum\limits_{k=1}^{m}\sqrt{t_{k}}\Bigl)\Bigl(\dfrac{1}{\sqrt{t_{i}}}-\dfrac{%
1}{\sqrt{t_{1}}}\Bigl)\Bigg].
\end{align*}

The condition $\dfrac{d}{ds}u(0)=0$ for a point $t_*=t$ and an arbitrary
vector $a$ (that is, the point $t_*$ is critical) is satisfied if and only
if for any $2\leq i\leq m$ it holds
\begin{equation*}
\log t_{*,1}-\log t_{*,i}=\mu_0(p)\Bigl(\sum\limits_{k=1}^m \sqrt{t_{*,k}}\Bigl)\Bigl(%
\dfrac{1}{\sqrt{t_{*,i}}}-\dfrac{1}{\sqrt{t_{*,1}}}\Bigl).
\end{equation*}
Let us denote $C_1=\mu_0(p)\Bigl(\sum\limits_{k=1}^m \sqrt{t_{*,k}}\Bigl), C_2=\log t_{*,1}+\dfrac{C_1}{\sqrt{t_{*,1}}} $. The equation
\begin{equation}\label{ac_Lem2_eq1}
C_2-\log s=C_1/\sqrt{s},\quad s>0,
\end{equation}
has at most two roots, since the derivatives of the right and the left parts with respect to $s$ coincide only
when $s=C_1^2/4$. But all the numbers $t_{*,1},...,t_{*,m}$ are solutions of the equation \eqref{ac_Lem2_eq1}. Thus, among the values of $\{t_{*,i}\}$ there are no more than
two different values, hence the lemma follows. \end{proof}

At the critical points \eqref{crit} the inequality $F(t_{*})\geq 0$ reduces
to
\begin{eqnarray*}
&&\mu _{0}(p,1/m)\Bigl(\sqrt{q(k/m)}+\sqrt{(1-q)(1-k/m)}\Bigl)^{2}-\mu
_{1}(p,1-1/m)\quad \\
&\leq &-q\log \dfrac{q}{k/m}-(1-q)\log \dfrac{1-q}{1-k/m},
\end{eqnarray*}%
where $\mu_0(p,\alpha)$ and $\mu_1(p,\alpha)$ are defined as
\begin{align*}
\mu_0(p,\alpha)=\dfrac{\sqrt{\dfrac{p(1-p)}{\alpha(1-\alpha)}}\Bigl(\log%
\dfrac{p}{\alpha}-\log\dfrac{1-p}{1-\alpha}\Big)}{\Bigl(\sqrt{\alpha p}+%
\sqrt{(1-\alpha)(1-p)}\Bigl)\Bigl(\sqrt{\dfrac{p}{\alpha}}-\sqrt{\dfrac{1-p}{%
1-\alpha}}\Bigl)},  
\\
\mu_1(p,\alpha)=\dfrac{\sqrt{\dfrac{p}{\alpha}}\log\dfrac{p}{\alpha}-\sqrt{%
\dfrac{1-p}{1-\alpha}}\log\dfrac{1-p}{1-\alpha}}{\sqrt{\dfrac{p}{\alpha}}-%
\sqrt{\dfrac{1-p}{1-\alpha}}}  
\end{align*}
for real variables $p$ and $\alpha$ varying in the range $(0,1)$.

But this inequality is a special case of lemma \ref{ac_Lem3} below, the
proof of which is based on the behavior of the difference between binary
entropy and a function containing both a linear component and a square root
term $\sqrt{t(1-t)}$. We perform a qualitative analysis of the minimizer of
this function. The global minimum (there can be no more than two local
minima) appears on either the interval $[0, 1/2]$ or $[1/2, 1]$.
Importantly, this segment contains no other critical points. Therefore,
minimizing the function reduces to identifying the critical point within the
appropriate interval and evaluating the function at that point. A detailed
proof of this statement follows from

\begin{Lem}
\label{ac_Lem4} Let $A> 0,\ B,\ C\in \mathbb{R}$ (the notation of the
coefficients is used only within the framework of this lemma). Then the
function
\begin{equation}  \label{ac_Lem4_form}
g(t)=h(t)-(A\sqrt{t(1-t)}+Bt+C)
\end{equation}
reaches a local minimum on $[0,1]$ at no more than two points $t_1,t_2:\
0<t_1\leq 1/2\leq t_2<1$ which are critical. In this case, if $B>0$ and if $%
t_*>1/2$ is critical (or $t_*<1/2$ and $B< 0$), then $t_*$ is a point of the
global minimum.
\end{Lem}

\begin{proof} We have
\begin{equation*}
g^{\prime}(0)=-\infty,\quad g^{\prime}(1)=+\infty,
\end{equation*}
therefore local minima are on the interval $(0,1)$.\newline

In the calculations we will pass to the natural logarithm.
\begin{align*}
g^{\prime}(t)=-\ln\dfrac{t}{1-t}-\dfrac{A\sqrt{t(1-t)}}{2}\Bigl(\dfrac{1}{t}-%
\dfrac{1}{1-t}\Bigl)-B= \\
=-\ln\dfrac{t}{1-t}-\dfrac{A(1-2t)}{2\sqrt{t(1-t)}}-B; \\
g^{\prime\prime}(t)=-\dfrac{1}{t(1-t)}+\dfrac{A}{4(t(1-t))^{3/2}}. \\
g^{\prime\prime}(t)=0\quad \Leftrightarrow\quad A=4\sqrt{t(1-t)}.
\end{align*}
Therefore, there are two inflection points located on different sides of $%
t=1/2$. It follows that there are three critical points $t_1\leq t_0\leq t_2$,
where $t_1,\ t_2$ are local minimum points.\newline

The last statement obviously follows from the symmetric case. Indeed, for $%
B<0$, it holds $g(t)\leq g(1-t)$ for $t\in[1/2,1]$, therefore the global
minimum lies on the segment $[1/2,1]$. Let $s,\ 1-s$ ($0<s< 1/2$) be zeros
of $g^{\prime\prime}(t)$. On the segment $[s,1-s]$ the function $%
g(t)^{\prime}$ is decreasing, on the segments $[0,s],\ [1-s,1]$ the
function $g(t)^{\prime}$ increases, and due to symmetry $(g(t)+Bt)^{\prime}$
has root $1/2$. Therefore, on the segment $[1/2,1]$ when $B<0$ the function $%
g(t)$ has no more than one critical point. \end{proof}

Let us finally proceed to the proof of lemma \ref{ac_Lem3}. The variables $%
\alpha,\ p,\ \beta,\ t$ used in what follows, unless otherwise stated, are
such that
\begin{equation*}
\alpha\in{(0,1/2],\quad p\in[1-\alpha, 1)},\quad \beta\in[\alpha,1/2],\quad t%
\in[0,1].
\end{equation*}

\begin{Lem}
\label{ac_Lem3} For $\alpha ,\ p,\ \beta ,\ t$ specified above one has the
inequality
\begin{eqnarray*}
&&\mu _{0}(p,\alpha )\Bigl(\sqrt{\beta t}+\sqrt{(1-\beta )(1-t)}%
\Bigl)^{2}-\mu _{1}(p,\alpha )  \notag \\
&\leq &-t\log \dfrac{t}{\beta }-(1-t)\log \dfrac{1-t}{1-\beta }.
\label{ac_Lem3_ineq1}
\end{eqnarray*}
\end{Lem}

\begin{proof}
Let
\begin{equation*}
f(t)=-t\log \dfrac{t}{\beta }-(1-t)\log \dfrac{1-t}{1-\beta}-\Bigl[\mu _{0}(p,\alpha )\Bigl(\sqrt{\beta t}+\sqrt{(1-\beta )(1-t)}\Bigl)^{2}-\mu _{1}(p,\alpha)\Bigl],
\end{equation*}
for which $A=2\mu_0(p,\alpha)\sqrt{\beta(1-\beta)}$, $B=\log\dfrac{1-\beta}{\beta}+\mu_0(p,\alpha)(2\beta-1)$ in the form \eqref{ac_Lem4_form} of lemma \ref{ac_Lem4}.

The derivative of $f(t)$ is
\begin{equation*}
f'(t)=\log\dfrac{1-t}{t}+\log\dfrac{\beta}{1-\beta}-\mu_0(p,\alpha)\Bigg(\sqrt{\dfrac{\beta}{t}}-\sqrt{\dfrac{1-\beta}{1-t}}\Bigg)\Bigl(\sqrt{\beta t}+\sqrt{(1-\beta)(1-t)}\Bigl).
\end{equation*}

Consider the following cases:
\begin{enumerate}
\item \underline{$B\leq 0$.}

In this case there is minimum point $t_*=\beta$ ($f'(\beta)=0$) of the function $f(t)$ on $[0,1/2]$ due to lemma \ref{ac_Lem4} presented later. One has
\begin{equation}\label{ac_Lem3_ineq2}
f(\beta)=\mu_1(p,\alpha)-\mu_0(p,\alpha)\geq 0.
\end{equation}

Indeed, the inequality \eqref{ac_Lem3_ineq2} is equivalent to
\begin{eqnarray}
\Bigl(\sqrt{\dfrac{p}{\alpha}}\log\dfrac{p}{\alpha}-\sqrt{\dfrac{1-p}{1-\alpha}}\log\dfrac{1-p}{1-\alpha}\Bigl)\Bigl(\sqrt{p\alpha}+\sqrt{(1-p)(1-\alpha)}\Bigl)- \notag\\
-\sqrt{\dfrac{p(1-p)}{\alpha(1-\alpha)}}\Bigl(\log\dfrac{p}{\alpha}-\log\dfrac{1-p}{1-\alpha}\Bigl)\geq 0\quad \Leftrightarrow\notag\\
p\log\dfrac{p}{\alpha}\Bigg(1-\sqrt{\dfrac{\alpha(1-p)}{p(1-\alpha)}}\Bigg)-(1-p)\log\dfrac{1-p}{1-\alpha}\Bigg(1-\sqrt{\dfrac{p(1-\alpha)}{\alpha(1-p)}}\Bigg)\geq 0\quad \Leftrightarrow \notag\\
\sqrt{\alpha p}\log\dfrac{p}{\alpha}+\sqrt{(1-\alpha)(1-p)}\log\dfrac{1-p}{1-\alpha}\geq 0.\quad\label{ac_Lem3_ineq3}
\end{eqnarray}

Let $V(p)=\sqrt{\alpha p}\ln\dfrac{p}{\alpha}+\sqrt{(1-\alpha)(1-p)}\ln\dfrac{1-p}{1-\alpha}$, then
\begin{enumerate}
\item
\begin{eqnarray*}
(d/dp)^2 V(p)=-\dfrac{\sqrt{\alpha}\ln(p/\alpha)}{4p^{3/2}}-\dfrac{\sqrt{1-\alpha}\ln((1-p)/(1-\alpha))}{4(1-p)^{3/2}}\geq 0
\end{eqnarray*}
(since $ p\geq 1-\alpha\geq 1/2$);
\item $V(1-\alpha)=0$ and
\begin{eqnarray*}
V'(1-\alpha)=\sqrt{\dfrac{\alpha}{1-\alpha}}-1+\dfrac{1}{2}\ln\dfrac{\alpha}{1-\alpha}+\dfrac{1}{2}\sqrt{\dfrac{\alpha}{1-\alpha}}\ln\dfrac{1-\alpha}{\alpha}\geq 0\quad \Leftrightarrow\\
\dfrac{\alpha}{1-\alpha}\Bigl(1+\dfrac{1}{2}\ln\dfrac{1-\alpha}{\alpha}\Bigl)-\Bigl(1-\dfrac{1}{2}\ln\dfrac{1-\alpha}{\alpha}\Bigl)\geq 0\quad \Leftrightarrow\\
\ln\dfrac{1-\alpha}{\alpha}\geq 2-4\alpha.
\end{eqnarray*}
\end{enumerate}

Therefore, $V(p)\geq 0$ and the inequality \eqref{ac_Lem3_ineq3} is established.

\item \underline{$B\geq 0$.}

This condition implies that the function $f(t)$ has the minimum point $t_*$ on $[1/2,1]$ (lemma \ref{ac_Lem4} is applied in the same way). This point is characterized by the equation
\begin{equation*}
\log\dfrac{1-t_*}{t_*}+\log\dfrac{\beta}{1-\beta}-\mu_0(p,\alpha)\Bigg(\sqrt{\dfrac{\beta}{t_*}}-\sqrt{\dfrac{1-\beta}{1-t_*}}\Bigg)\Bigl(\sqrt{\beta t_*}+\sqrt{(1-\beta)(1-t_*)}\Bigl)=0.
\end{equation*}

The value of $t_*$ depends on $\beta$, and for the function $f(\beta,t)\equiv f(t)$ it holds $\dfrac{d}{d\beta}f(\beta,t_*(\beta))=
\dfrac{\partial}{\partial \beta}f(\beta, t_*(\beta))+\dfrac{dt_*(\beta)}{d\beta}\dfrac{\partial}{\partial t}f(\beta,t_*(\beta))=
\dfrac{\partial}{\partial \beta}f(\beta, t_*(\beta))$ since $t_*$ is critical for $f(t)$. Then,
\begin{eqnarray*}
\dfrac{d}{d\beta}f(t_*)=\dfrac{1}{\ln 2}\Bigl(\dfrac{t_*}{\beta}-\dfrac{1-t_*}{1-\beta}\Bigl)-\mu_0(p,\alpha)\Bigg(\sqrt{\dfrac{t_*}{\beta}}-\sqrt{\dfrac{1-t_*}{1-\beta}}\Bigg)\times \\
\times\Bigl(\sqrt{\beta t_*}+\sqrt{(1-\beta)(1-t_*)}\Bigl).
\end{eqnarray*}
Hence
\begin{equation*}
\ln 2\ \dfrac{d}{d\beta}f(t_*)=\Bigl(\dfrac{t_*}{\beta}-\dfrac{1-t_*}{1-\beta}\Bigl)+\sqrt{\dfrac{t_*(1-t_*)}{\beta(1-\beta)}}\Bigl(\ln\dfrac{1-t_*}{t_*}-\ln\dfrac{1-\beta}{\beta}\Bigl)\geq 0
\end{equation*}
since
\begin{equation*}
\sqrt{\dfrac{1-\beta}{\beta}\dfrac{t_*}{1-t_*}}-\sqrt{\dfrac{\beta}{1-\beta}\dfrac{1-t_*}{t_*}}+\ln\Bigl(\dfrac{\beta}{1-\beta}\dfrac{1-t_*}{t_*}\Bigl)= R-1/R-2\ln R\geq 0
\end{equation*}
where $R\equiv\sqrt{\dfrac{1-\beta}{\beta}\dfrac{t_*}{1-t_*}}\geq 1$.\\

It is simple to calculate that for $\beta=\alpha$, we have $t_*=p$, and $f(p)=0$. Therefore, $f(t_*)\geq 0$ for all $\alpha\leq \beta\leq 1/2$. We obtained the desired inequality
$f(t)\geq 0$ for the minimizer $t_*$ and hence for all the other points $t\in[0,1]$.
\end{enumerate}
\end{proof}

In the special case of \eqref{ac_Lem3_ineq3}, where $\alpha =1/2$, one has
an interesting inequality
\begin{equation*}
h_{1/2}(p)\equiv \sqrt{2p}\log (2p)+\sqrt{2(1-p)}\log (2(1-p))\geq 0,
\end{equation*}%
which takes place for $p\in (0,1)$. The function $h_{1/2}(p)$ is
nonnegative, convex and $h_{1/2}(p)\leq \sqrt{2}$.

\subsection{The case of theorem \protect\ref{th3}}

\label{sec:obtuse}

This section is devoted to proving the inequality \eqref{basic2} which
verifies the condition (ii.a) in the case of obtuse pyramid. Namely, \textit{%
if the parameter $p<1$ satisfies $p\geq\max\left\{\dfrac{m-1}{m}%
,p(m)\right\} $ with $p(m)$ defined in \eqref{cond}, then for complex
numbers $z_1,...,z_m$ satisfying (\ref{condw1}) the inequality
\begin{equation*}  \label{tildef}
\tilde{F}(z)\equiv -\sum\limits_{j=1}^m |z_j|^2\log|z_j|^2 - \tilde{\mu}%
_0(p)\Bigl|\sum\limits_{j=1}^m z_j\Bigl|^2+\tilde{\mu}_1(p)\geq 0.
\end{equation*}
holds with $\tilde{\mu}_0(p), \tilde{\mu}_1(p)$ defined in (\ref{0tilde}), (%
\ref{1tilde}).}
\begin{proof}[Suggestion for proof with computer verification]
As it has been established in lemma \ref{lemma}, it suffices to consider real vectors.

For real vectors $(z_j)_{j=1}^m$ on the unit sphere let us consider the Lagrange function
\begin{equation*}
L(z_1,...,z_k)=-\sum\limits_{k=1}^m z_k^2\log z_k^2-\tilde{\mu}_0(p)\Bigl(\sum\limits_{k=1}^m z_k\Bigl)^2+\tilde{\mu}_1(p)-\lambda\Bigl(\sum\limits_{k=1}^m z_k^2-1\Bigl).
\end{equation*}

The first-order conditions yield
\begin{equation*}
-\dfrac{1}{2}\dfrac{\partial L}{\partial z_k}=z_k\log z_k+\tilde{\mu}_0(p)\Bigl(\sum\limits_{j=1}^m z_j\Bigl)+(\lambda+\log e) z_k=0,\quad 1\leq k\leq m,
\end{equation*}
and the set $\{z_j\}$ has at most three different values $s_1,s_2,s_3$ with
multiplicities $k_1,k_2,k_3\geq 0$ ($k_1+k_2+k_3=m$),  since equation
\begin{equation}\label{T3_eq1}
z\log z^2+\tilde{\mu}_0(p)\Bigl(\sum\limits_{j=1}^m z_j\Bigl)+(\lambda+\log e) z=0
\end{equation}
in a variable $z$ has at most three real solutions. The fact that the equation
of the form $z\log z^2+az+b=0$ for real $a$ and $b$ has no more than three real roots was
actually used in the proof of lemma \ref{ac_Lem2} for equation \eqref{ac_Lem2_eq1} which is similar to \eqref{T3_eq1} with $\leq 2$ roots. However,
here we consider the entire real line, so the maximum number of real roots is three.

We consider two cases.

{\it Case 1. The minimizer has a zero component.}\\
Then $\lambda+\log e=0$ and the equation \eqref{T3_eq1} has roots $0$ and $\pm s$. Therefore the minimizer of
$\tilde{F} (z)$ must be of the form
$(\underbrace{s,-s,...,s,-s}_{k\text{ times}},0,...,0)$, where $2ks^2=1$ giving the value $-2ks^2\log s^2+\tilde{\mu}_1(p)=\log 2k+\tilde{\mu}_1(p)$
which has the minimum equal to $1+\tilde{\mu}_1(p)$ for $k=1$. The value $1+\tilde{\mu}_1(p)$ is nonnegative for the following reason.

First, we prove that the function $\tilde{\mu}_1(p)$ is increasing for $p$ in the range $\Bigl[\dfrac{m-1}{m},1\Bigr)$.
The derivative is

\begin{equation*}
\tilde{\mu}_1'(p)=\log e\dfrac{2(m-2)\sqrt{\dfrac{p(1-p)}{m-1}}-4p+2+\ln p-\ln \dfrac{1-p}{m-1}}{2(m-1)\sqrt{\dfrac{p(1-p)}{m-1}}\Bigg(\sqrt{p}+\sqrt{\dfrac{1-p}{m-1}}\Bigg)^2}.
\end{equation*}

Here, $2(m-2)\sqrt{\dfrac{p(1-p)}{m-1}}\geq 0$ and $\ln\dfrac{p}{1-p}-4p+2\geq 0$, hence $\tilde{\mu}_1'(p)\geq 0$.\\

Thus for $\frac{m-1}{m}\leq p< 1$ we have
\begin{equation*}
-\tilde{\mu}_1(p)\leq  -\tilde{\mu}_1\Bigl(\frac{m-1}{m}\Bigl)=-\dfrac{m-2}{m}\log(m-1)+\log m
\end{equation*}
by (\ref{val}). For $p(m)\leq p< 1$ we have
\begin{equation*}
-\tilde{\mu}_1(p)\leq -\tilde{\mu}_1(p(m))=1,
\end{equation*}
by (\ref{val1}). 
Therefore, $1+\tilde{\mu}_1(p)\geq 0$ for $p>\max(p(m),(m-1)/m)$. Note that
$\tilde{\mu}_1(p)\geq \tilde{\mu}_1((m-1)/m)\geq \tilde{\mu}_1(p(m))=-1$ if $p(m)<(m-1)/m$.

{\it Case 2. The minimizer has no zero components.}\\
The Lagrange multipliers method predicts that the minimizer has at most 3 different values,
that are in our case nonzero. They are denoted as $s_1<0<  s_2\leq s_3$
(without loss of generality we can assume that only $s_1$ is negative)
with multipliers $k_1,k_2,k_3$,
such that $k_1+k_2+k_3=m$ and $k_1s_1^2+k_2s_2^2+k_3s_3^2=1$. Our hypothesis is that,
however, there is only two distinct nonzero values among $s_1,s_2,s_3$. This fact can be
verified by computer experiments; we performed numerical simulations for
$m$ from $3$ to $50$ and for the parameter $a=\tilde{\mu}_0(p)>0$ taking values in $\{0.1, 0.5, 1, 5, 10, 100, \infty\}$ (see Appendix). 

Thus, up to permutation and an overall sign change, $(z_j)_{j=1}^m=(\underbrace{s_1,...,s_1}_{k_1\text{ times}},\underbrace{s_2,...,s_2}_{k_2\text{ times}})$, ${s_1< 0}$, ${s_2>0}$. The components of the minimizer must have opposite signs;
otherwise, changing the sign of a single component would decrease $\tilde{F}$.

Denote
\begin{eqnarray*}
\tilde{\mu}_0(p,\alpha)=\dfrac{\sqrt{\dfrac{p(1-p)}{\alpha(1-\alpha)}}\Bigl(\log\dfrac{p}{\alpha}-\log\dfrac{1-p}{1-\alpha}\Bigl)}{\Bigl(\sqrt{(1-p)(1-\alpha)}-\sqrt{p\alpha}\Bigl)\Bigl(\sqrt{\dfrac{p}{\alpha}}+\sqrt{\dfrac{1-p}{1-\alpha}}\Bigl)},\\
\tilde{\mu}_1(p,\alpha)=\dfrac{\sqrt{\dfrac{p}{\alpha}}\log\dfrac{p}{\alpha}+\sqrt{\dfrac{1-p}{1-\alpha}}\log\dfrac{1-p}{1-\alpha}}{\sqrt{\dfrac{p}{\alpha}}+\sqrt{\dfrac{1-p}{1-\alpha}}},
\end{eqnarray*}
where $\alpha$ and $p$ lie on $(0,1)$.\\

Relation with $\tilde{\mu}_0(p)$ and $\tilde{\mu}_1(p)$ is given by
\begin{eqnarray*}
\tilde{\mu}_0(p)=\dfrac{1}{m}\tilde{\mu}_0(p,1/m),\\
\tilde{\mu}_1(p)=\tilde{\mu}_1(p,1/m)-\log m.
\end{eqnarray*}

It is easy to see that for $\alpha$ and $p$ such that $0<\alpha\leq 1/2\leq1-\alpha\leq p<1$ one has  $\log\dfrac{p}{\alpha}-\log\dfrac{1-p}{1-\alpha}\geq 0$
and ${\sqrt{(1-p)(1-\alpha)}-\sqrt{p\alpha}\leq 0}$, therefore, $\tilde{\mu}_0(p,\alpha)\leq 0$.\\

The initial inequality \eqref{basic2} for $k_1 s_1^2=t$ is equivalent to
\begin{equation*}\label{obt_ineq1+}
-t\log\dfrac{t}{k_1/m}-(1-t)\log\dfrac{1-t}{k_2/m}\geq \tilde{\mu}_0(p,1/m)\Bigl(\sqrt{t(k_1/m)}-\sqrt{(1-t)(k_2/m)}\Bigl)^2-\tilde{\mu}_1(p,1/m),
\end{equation*}
and the result follows from lemma \ref{obt_Lem1}.

\end{proof}

\begin{Lem}
\label{obt_Lem1} Let $0\leq \alpha\leq \beta\leq \dfrac{1}{2}\leq
1-\alpha\leq p\leq 1$, $0\leq t\leq 1$. Then
\begin{equation*}
\tilde{f}(t)\equiv -t\log\dfrac{t}{\beta}-(1-t)\log\dfrac{1-t}{1-\beta}- %
\Bigg(\tilde{\mu}_0(p,\alpha)\Bigl(\sqrt{\beta t}-\sqrt{(1-\beta)(1-t)}%
\Bigl)^2-\tilde{\mu}_1(p,\alpha)\Bigg)\geq 0.
\end{equation*}
\end{Lem}

\begin{proof}
We will prove that the inequality
\begin{equation*}
\mu_0(p,\alpha)\Bigl(\sqrt{\beta t}+\sqrt{(1-\beta)(1-t)}\Bigl)^2-\mu_1(p,\alpha)\geq \tilde{\mu}_0(p,\alpha)\Bigl(\sqrt{\beta t}-\sqrt{(1-\beta)(1-t)}\Bigl)^2-\tilde{\mu}_1(p,\alpha)
\end{equation*}
holds in this range. So, the statement of the lemma follows from lemma \ref{ac_Lem3}.\\

Collecting similar terms and using notation of $f(t)$ from subsection \ref{prroof},
\begin{eqnarray*}
\tilde{f}(t)-f(t)= 2\sqrt{\beta(1-\beta)}\Bigl(\mu_0(p,\alpha)+\tilde{\mu}_0(p,\alpha)\Bigl)\sqrt{t(1-t)}-(1-2\beta)\Bigl(\mu_0(p,\alpha)-\tilde{\mu}_0(p,\alpha)\Bigl)t+\\
+\Bigl[(1-\beta)\Bigl(\mu_0(p,\alpha)-\tilde{\mu}_0(p,\alpha)\Bigl)-\Bigl(\mu_1(p,\alpha)-\tilde{\mu}_1(p,\alpha)\Bigl)\Bigl]\geq 0.
\end{eqnarray*}
One can see that
\begin{eqnarray*}
\mu_0(p,\alpha)-\tilde{\mu}_0(p,\alpha)=2(2p-1)\dfrac{\sqrt{p(1-p)\alpha(1-\alpha)}\Bigl(\log\dfrac{p}{\alpha}-\log\dfrac{1-p}{1-\alpha}\Bigl)}{(p-\alpha)(p+\alpha-1)},\\
\mu_0(p,\alpha)+\tilde{\mu}_0(p,\alpha)=-2(1-2\alpha)p(1-p)\dfrac{\log\dfrac{p}{\alpha}-\log\dfrac{1-p}{1-\alpha}}{(p-\alpha)(p+\alpha-1)}\leq 0,\\
\mu_1(p,\alpha)-\tilde{\mu}_1(p,\alpha)=2\dfrac{\sqrt{p(1-p)\alpha(1-\alpha)}\Bigl(\log\dfrac{p}{\alpha}-\log\dfrac{1-p}{1-\alpha}\Bigl)}{p-\alpha}.
\end{eqnarray*}

The function $(\tilde{f}-f)(t)$ has the form $A\sqrt{t(1-t)}+Bt+C$ with $A\leq 0$ and $B\leq 0$, then the unique minimum point $t_*\in[1/2,1]$.
It has the value $t_*=\dfrac{1}{2}\Bigl(1-\dfrac{B}{\sqrt{A^2+B^2}}\Bigl)$ and the minimum of the function $(\tilde{f}-f)(t)$ is equal to
\begin{equation*}
(\tilde{f}-f)(t_*)=\dfrac{B}{2}+C-\dfrac{1}{2}\sqrt{A^2+B^2},
\end{equation*}
where
\begin{eqnarray*}
A=2\sqrt{\beta(1-\beta)}\Bigl(\mu_0(p,\alpha)+\tilde{\mu}_0(p,\alpha)\Bigl),\\
B=-(1-2\beta)\Bigl(\mu_0(p,\alpha)-\tilde{\mu}_0(p,\alpha)\Bigl),\\
C=\Bigl[(1-\beta)\Bigl(\mu_0(p,\alpha)-\tilde{\mu}_0(p,\alpha)\Bigl)-\Bigl(\mu_1(p,\alpha)-\tilde{\mu}_1(p,\alpha)\Bigl)\Bigl].
\end{eqnarray*}

One has
\begin{eqnarray*}
\dfrac{B}{2}+C=\dfrac{\sqrt{p(1-p)\alpha(1-\alpha)}\Bigl(\log\dfrac{p}{\alpha}-\log\dfrac{1-p}{1-\alpha}\Bigl)}{(p-\alpha)(p+\alpha-1)}\times\\
\times\Bigl((2p-1)-2(p+\alpha-1)\Bigl)=\\
=(1-2\alpha)\dfrac{\sqrt{p(1-p)\alpha(1-\alpha)}\Bigl(\log\dfrac{p}{\alpha}-\log\dfrac{1-p}{1-\alpha}\Bigl)}{(p-\alpha)(p+\alpha-1)}\geq 0,\\
\dfrac{1}{2}\sqrt{A^2+B^2}=(1-2\alpha)\dfrac{\sqrt{p(1-p)\alpha(1-\alpha)}\Bigl(\log\dfrac{p}{\alpha}-\log\dfrac{1-p}{1-\alpha}\Bigl)}{(p-\alpha)(p+\alpha-1)}\times\\
\times \sqrt{4\ \dfrac{p(1-p)\beta(1-\beta)}{\alpha(1-\alpha)}+\dfrac{(1-2p)^2(1-2\beta)^2}{(1-2\alpha)^2}}
\end{eqnarray*}
Thus,
\begin{eqnarray}\label{obtL2_ineq1}
\dfrac{\sqrt{A^2+B^2}}{B+2C}=\sqrt{4\ \dfrac{p(1-p)\beta(1-\beta)}{\alpha(1-\alpha)}+\dfrac{(1-2p)^2(1-2\beta)^2}{(1-2\alpha)^2}}\leq 1
\end{eqnarray}

Let us denote $u(t)=4t(1-t)$. The inequality \eqref{obtL2_ineq1} can be rewritten in the form
\begin{equation*}
\dfrac{u(\beta)u(p)}{u(\alpha)}+\dfrac{(1-u(\beta))(1-u(p))}{1-u(\alpha)}\leq 1,
\end{equation*}
where $u(p)\leq u(\alpha)\leq u(\beta)$. The linear function
\begin{equation*}
t\mapsto W(t)=\dfrac{u(\beta)t}{u(\alpha)}+\dfrac{(1-u(\beta))(1-t)}{1-u(\alpha)}
\end{equation*}
has non-negative derivative $\dfrac{u(\beta)}{u(\alpha)}-\dfrac{1-u(\beta)}{1-u(\alpha)}=\dfrac{u(\beta)-u(\alpha)}{u(\alpha)(1-u(\alpha))}$ and ${W(u(\alpha))=1}$.
Therefore, the inequality \eqref{obtL2_ineq1} holds, and lemma has been proven.
\end{proof}

\section*{Appendix. Numerical Verification of Hypotheses for Minimization on the Sphere (Generated by DeepSeek)}\label{Appendix}

We consider the minimization problem of the function
\begin{equation}\label{tildeF}
\tilde{F}(z_1,\dots,z_m) = -\sum_{k=1}^m z_k^2\log_2(z_k^2) + a\left(\sum_{k=1}^m z_k\right)^2
\end{equation}
on the unit sphere $\sum_{k=1}^m z_k^2 = 1$, where the parameter $a>0$
equals to $\tilde{\mu}_0(p)$ given by \eqref{0tilde} in the obtuse scenario; in the flat scenario (formally corresponding to the case
$a=\infty$) the second term in the righthand side of \eqref{tildeF} is absent and there is  additional constraint $\sum_{k=1}^m z_k = 0$.

The main hypothesis suggests that minimizers have at most two distinct nonzero values.
To verify this hypothesis numerically, we implement the following verification scheme:

\begin{algorithm}[H]
\caption{Numerical verification of the minimization hypothesis}
\begin{algorithmic}[1]

\State \textbf{Input:} Parameter $a$, dimension range $[m_{\min}, m_{\max}]$, tolerance $\epsilon = 10^{-8}$
\State \textbf{Output:} Verification result for each dimension $m$

\Procedure{VerifyHypothesis}{$a$, $m_{\min}$, $m_{\max}$}
    \For{$m = m_{\min}$ \textbf{to} $m_{\max}$}
        \State Find approximate minimizer $\mathbf{z}^*$ on the sphere
        \State Compute $\tilde{F}_{\min} = \tilde{F}(\mathbf{z}^*)$

        \State \Comment{Test Hypothesis 1: Two-value structure}
        \State $\mathbf{z}_{\text{proj}} \gets \text{ProjectToTwoValueStructure}(\mathbf{z}^*, a)$
        \If{$\mathbf{z}_{\text{proj}} \neq \text{None}$}
            \State $\Delta \tilde{F}_1 \gets \tilde{F}(\mathbf{z}_{\text{proj}}) - \tilde{F}_{\min}$
            \State $\text{valid}_1 \gets (|\Delta \tilde{F}_1| \leq \epsilon)$
        \Else
            \State $\text{valid}_1 \gets \text{False}$
        \EndIf

        \State \Comment{Test Hypothesis 2: Special two-component vector}
        \State $\mathbf{z}_{\text{special}} \gets (1/\sqrt{2}, 0,\dots,0,-1/\sqrt{2})$
        \State $\Delta \tilde{F}_2 \gets \tilde{F}(\mathbf{z}_{\text{special}}) - \tilde{F}_{\min}$
        \State $\text{valid}_2 \gets (|\Delta \tilde{F}_2| \leq \epsilon)$

        \State \Comment{Accept hypothesis if either projection is nearly optimal}
        \State $\text{confirmed} \gets \text{valid}_1 \ \textbf{or} \ \text{valid}_2$
        \State Output result for dimension $m$
    \EndFor
\EndProcedure

\end{algorithmic}
\end{algorithm}

For a given vector $\mathbf{z}^*$, we compute the averages $u$ and $v$
of its positive and negative components, respectively, then rescale them
to lie on the unit sphere. In the case $a=\infty$, the exact values
$u = \sqrt{\ell/(k m)}$, $v = -\sqrt{k/(\ell m)}$ are used to satisfy
both $\|\mathbf{z}\|=1$ and $\sum_i z_i=0$. The projection replaces all
positive entries by $u$ and all negative entries by $v$, preserving zeros.

%
%
%
%
%
%
%
%
%
%
%
%
%
%
%
%

\subsection*{Verification Criterion}

The hypothesis is considered confirmed for dimension $m$ if either:
\begin{itemize}
    \item The projection $\mathbf{z}_{\text{proj}}$ onto a two-value structure yields a function value within tolerance $\epsilon$ of the computed minimum: $|\tilde{F}(\mathbf{z}_{\text{proj}}) - \tilde{F}_{\min}| \leq \epsilon$
    \item The special two-component vector $\mathbf{z}_{\text{special}} = (1/\sqrt{2},0,\dots,0,-1/\sqrt{2})$ yields a function value within tolerance $\epsilon$ of the computed minimum
\end{itemize}

We use $\epsilon = 10^{-8}$ throughout our computations. The optimization is performed using SciPy's SLSQP algorithm with multiple initial points to ensure robustness.

\subsection*{}
\begin{alltt}
======================================================================
VERIFICATION OF THE MINIMIZATION HYPOTHESIS FOR FUNCTION F ON THE SPHERE
======================================================================
Enter parameter a: 10
Enter m_min: 3
Enter m_max: 50

======================================================================
Parameters: a = 10, m in [3, 50]
Criterion: hypothesis holds if \(\Delta\tilde{F}\)_1 < 1e-8 or \(\Delta\tilde{F}\)_2 < 1e-8
======================================================================

   m |    \(\tilde{F}\)_min    |    \(\Delta\tilde{F}\)_1     |    \(\Delta\tilde{F}\)_2     | Result
-----------------------------------------------------------------
   3 |   1.00000000 |     2.52e-01 |    -9.44e-15 |    YES
   4 |   1.00000000 |     2.08e-01 |    -4.11e-15 |    YES
   5 |   1.00000000 |     1.22e-01 |    -4.33e-15 |    YES
   6 |   0.98734793 |    -1.44e-15 |     1.27e-02 |    YES
   7 |   0.91479512 |     0.00e+00 |     8.52e-02 |    YES
   8 |   0.85213438 |     1.11e-16 |     1.48e-01 |    YES
   9 |   0.79788958 |    -8.88e-16 |     2.02e-01 |    YES
  10 |   0.75061687 |     1.11e-16 |     2.49e-01 |    YES
  11 |   0.70910123 |    -3.33e-16 |     2.91e-01 |    YES
  12 |   0.67236219 |    -3.33e-16 |     3.28e-01 |    YES
  13 |   0.63961653 |    -3.33e-16 |     3.60e-01 |    YES
  14 |   0.61023775 |     0.00e+00 |     3.90e-01 |    YES
  15 |   0.58372158 |    -2.22e-16 |     4.16e-01 |    YES
  16 |   0.55965888 |    -2.22e-16 |     4.40e-01 |    YES
  17 |   0.53771482 |    -1.67e-15 |     4.62e-01 |    YES
  18 |   0.51761305 |     2.22e-16 |     4.82e-01 |    YES
  19 |   0.49912361 |     5.55e-17 |     5.01e-01 |    YES
  20 |   0.48205361 |    -1.11e-16 |     5.18e-01 |    YES
  21 |   0.46624010 |    -5.55e-17 |     5.34e-01 |    YES
  22 |   0.45154446 |    -7.77e-15 |     5.48e-01 |    YES
  23 |   0.43784802 |    -2.78e-16 |     5.62e-01 |    YES
  24 |   0.42504857 |    -2.50e-15 |     5.75e-01 |    YES
  25 |   0.41305761 |    -2.78e-16 |     5.87e-01 |    YES
  26 |   0.40179807 |    -8.88e-16 |     5.98e-01 |    YES
  27 |   0.39120256 |    -3.33e-16 |     6.09e-01 |    YES
  28 |   0.38121185 |    -2.22e-16 |     6.19e-01 |    YES
  29 |   0.37177368 |    -2.33e-15 |     6.28e-01 |    YES
  30 |   0.36284175 |     1.11e-16 |     6.37e-01 |    YES
  31 |   0.35437487 |     3.89e-16 |     6.46e-01 |    YES
  32 |   0.34633629 |    -2.78e-16 |     6.54e-01 |    YES
  33 |   0.33869310 |    -1.11e-16 |     6.61e-01 |    YES
  34 |   0.33141576 |    -5.55e-17 |     6.69e-01 |    YES
  35 |   0.32447761 |    -2.05e-15 |     6.76e-01 |    YES
  36 |   0.31785461 |     3.33e-16 |     6.82e-01 |    YES
  37 |   0.31152497 |    -2.66e-15 |     6.88e-01 |    YES
  38 |   0.30546889 |    -7.61e-15 |     6.95e-01 |    YES
  39 |   0.29966835 |    -3.33e-16 |     7.00e-01 |    YES
  40 |   0.29410691 |    -1.59e-14 |     7.06e-01 |    YES
  41 |   0.28876954 |    -6.11e-16 |     7.11e-01 |    YES
  42 |   0.28364247 |    -2.61e-15 |     7.16e-01 |    YES
  43 |   0.27871305 |    -1.67e-16 |     7.21e-01 |    YES
  44 |   0.27396965 |    -8.60e-15 |     7.26e-01 |    YES
  45 |   0.26940154 |    -8.60e-15 |     7.31e-01 |    YES
  46 |   0.26499883 |     3.33e-16 |     7.35e-01 |    YES
  47 |   0.26075237 |     2.78e-16 |     7.39e-01 |    YES
  48 |   0.25665367 |    -4.00e-15 |     7.43e-01 |    YES
  49 |   0.25269489 |     5.55e-16 |     7.47e-01 |    YES
  50 |   0.24886871 |    -1.67e-16 |     7.51e-01 |    YES

======================================================================
FINAL RESULTS:
======================================================================
 YES: Hypothesis CONFIRMED for all 48 values of m

\end{alltt}

The computational results reveal that for large $a$ and small $m$ ($< 7$),
the minimizer typically contains only two nonzero entries of opposite sign.
Otherwise, all components are nonzero and assume exactly two distinct values.

\section{Postscriptum (added 15/07/2026)}

In response to our  postings \verb"arXiv:2506.06700" (first version of this one, related to the inequalities for the Shannon entropy) and \verb"arXiv:2603.24017" (dealing with the corresponding inequalities for the R\'enyi entropy in the flat case), the preprints of Haonan Zhang, \verb"arXiv:2605.05243",  and of Alvan Arulandu, arXiv:2606.14698, appeared, suggesting independent AI-assisted proofs of our conjectured inequalities for flat and, respectively, obtuse pyramids. Further, we wish to add that inequality \eqref{hacute} for strongly acute pyramid is closely related to the sharp log-Sobolev inequality for simple random walk on $Z_{d}$ (Pott's model) computed in P. Diaconis, L. Saloff-Coste, Logarithmic Sobolev inequalities for finite Markov chains, The Annals of Applied Probability, \textbf{6} (3), 695-750, (1996). Thus all our conjectured tight inequalities, along with their consequences for the accessible information of quantum pyramids, are completely analytically confirmed.

Here we also wish to make a clarification concerning the general optimality conditions in theorem \ref{th1}. The relations \textbf{iii.a,b} of this theorem give the universally valid necessary and sufficient conditions for the optimality; however the relations \textbf{ii.a,b} are applicable, strictly speaking, only under certain additional assumption. Namely, the quantities $\mathrm{Tr\,}\,\sigma
M_{j}^{\prime }$ in the equation \eqref{K} defining $K(\sigma)$ should be all positive in order to avoid the infinite value of the logarithms. In particular, this holds for all $\sigma$ if all the operators $M_{j}^{\prime }$ are strictly positive. The conditions \textbf{ii.a,b} then acquire literal meaning. On the other hand, such an assumption does not hold for an ensemble of pure linearly independent states like quantum pyramids. However, while the inequality \textbf{{ii.a}} remains valid \textit{formally}, tolerating the value $+\infty$ in the righthand side, the equation \textbf{ii.b} which was used to find the eigenvalues of operator $\Lambda_0$ for this ensemble, is still valid literally because the expression
\begin{equation*}
K(\sigma _{k}^{0})\,|\phi_{k}^{0}\rangle = -\sum_{j}M_{j}^{\prime }|\phi
_{k}^{0}\rangle \log \langle \phi _{k}^{0}|M_{j}^{\prime }|\phi
_{k}^{0}\rangle
\end{equation*}
is always well defined taking into account the usual convention $0\log 0=0$.
Therefore the derivation of the operator $\Lambda_0$ and of all the corresponding hypothetical entropy inequalities
also remains valid. In fact, corresponding calculations could be performed on the basis of the more appropriate equation \textbf{iii.b}, which however would be less intuitive. Alternative proof of the conditions \textbf{iii.a,b} avoiding introduction of the operators $K(\sigma)$ is outlined in the paper: A.S. Holevo, Probability measures on the set of quantum states
in optimization problems for information quantities, Theory Probab. Appl. \textbf{71}(2), (2026).

\bigskip
\centerline{Funding}
This work is supported by Russian Science
Foundation under the grant No 24-11-00145.

\end{document}